\renewcommand{\leq}{\leqslant}
\renewcommand{\geq}{\geqslant}
\renewcommand{\epsilon}{\varepsilon}
\renewcommand{\phi}{\varphi}
\numberwithin{table}{section}
\numberwithin{figure}{section}
\numberwithin{equation}{section}
\let\@oldstar\*
\newlist{easylist@env}{enumerate}{8}
\setlist[easylist@env,1]{label=\labelitemi}
\setlist[easylist@env,2]{label=\labelitemii}
\setlist[easylist@env,3]{label=\labelitemiii}
\setlist[easylist@env,4]{label=\labelitemiv}
\setlist[easylist@env,5]{label=\labelitemi}
\setlist[easylist@env,6]{label=\labelitemii}
\setlist[easylist@env,7]{label=\labelitemiii}
\setlist[easylist@env,8]{label=\labelitemiv}
\newcounter{easylist@prev}\setcounter{easylist@prev}{0}
\newcounter{easylist@this}\setcounter{easylist@this}{1}
\newcounter{easylist@diff}\setcounter{easylist@diff}{0}
\newcommand{\easylist@star}{\addtocounter{easylist@this}{1}\*}
\newcommand{\easylist@nostar}{%
  \def\easylist@type{{easylist@env}}%
  \ifthenelse{\value{easylist@this}>\value{easylist@prev}}{%
    \@ifnextchar{[}{\easylist@settype}{\easylist@checkend}%
  }{%
    \easylist@checkend%
  }
}
\def\easylist@settype[#1]{%
  \def\easylist@type{{easylist@env}[#1]}%
  \easylist@checkend%
}
\newcommand{\easylist@checkend}{%
  \@ifnextchar{/}{%
    \addtocounter{easylist@this}{-1}%
    \expandafter\easylist@go{}\@gobble%
  }{%
    \easylist@go{\item}%
  }%
}
\newcommand{\easylist@go}[1]{%
  \setcounter{easylist@diff}{\value{easylist@this}}%
  \addtocounter{easylist@diff}{-\value{easylist@prev}}%
  \whiledo{\value{easylist@diff}>1}{%
    \addtocounter{easylist@diff}{-1}%
    \begin{easylist@env}%
    \item
  }%
  \ifthenelse{\value{easylist@diff}=1}{%
    \addtocounter{easylist@diff}{-1}%
    \expandafter\begin\easylist@type%
  }{}%
  \whiledo{\value{easylist@diff}<0}{%
    \addtocounter{easylist@diff}{1}%
    \end{easylist@env}%
  }%
  \setcounter{easylist@prev}{\value{easylist@this}}%
  \setcounter{easylist@this}{1}%
  \ifthenelse{\value{easylist@prev}>0}{#1}{}%
}
\def\[#1\]{%
  \begingroup%
  \xpretocmd{\label@in@display}{\yestag}{}{}%
  \noexpandarg\IfSubStr{#1}{\allowdisplaybreaks}{\allowdisplaybreaks}{}%
  \noexpandarg\IfSubStr{#1}{&}{\begin{align*}#1\end{align*}}{%
    \noexpandarg\IfSubStr{#1}{\\}{\begin{gather*}#1\end{gather*}}{%
      \begin{equation*}#1\end{equation*}}}%
  \endgroup%
  \ignorespaces%
}
\newcommand{\yestag@cref@format}{[equation][\arabic{equation}][\thesection]} 
\newcommand*\yestag@make@df@tag@@[1]{%
  \cref@old@make@df@tag@@{#1}%
  \let\cref@old@df@tag\df@tag
  \expandafter\gdef\expandafter\df@tag\expandafter{%
    \cref@old@df@tag
    \def\cref@currentlabel{\yestag@cref@format#1}%
  }%
}
\newcommand*\yestag@make@df@tag@@@[1]{%
  \cref@old@make@df@tag@@@{#1}%
  \let\cref@old@df@tag\df@tag
  \expandafter\gdef\expandafter\df@tag\expandafter{%
    \cref@old@df@tag
    \toks@\@xp{\p@equation{#1}}%
    \edef\cref@currentlabel{\yestag@cref@format\the\toks@}%
  }%
}
\newcommand{\yestag}{%
  \refstepcounter{equation}%
  \begingroup%
    \let\make@df@tag@@\yestag@make@df@tag@@
    \let\make@df@tag@@@\yestag@make@df@tag@@@
  \tag{\theequation}%
  \endgroup%
}
\renewcommand{\autoref}{\cref}
\newcommand{\cCrefname}[3]{\crefname{#1}{#2}{#3}\Crefname{#1}{#2}{#3}}
\newcommand{\crefShorten}{%
}
\newcommand{\crefShortenAdd}[4]{\appto{\crefShorten}{#1{#2}{#3}{#4}}}
\crefname{equation}{\@gobble}{\@gobble}
\Crefname{equation}{Equation}{Equations}
\crefShortenAdd\Crefname{equation}{Eq.}{Eqs.}
\crefShortenAdd\cCrefname{chapter}{Ch.}{Chs.}
\crefShortenAdd\cCrefname{section}{\textsection}{\textsection\textsection}
\crefShortenAdd\cCrefname{subsection}{\textsection}{\textsection\textsection}
\crefShortenAdd\cCrefname{subsubsection}{\textsection}{\textsection\textsection}
\crefShortenAdd\cCrefname{paragraph}{\textparagraph}{\textparagraph\textparagraph}
\crefShortenAdd\cCrefname{subparagraph}{\textparagraph}{\textparagraph\textparagraph}
\crefShortenAdd\cCrefname{appendix}{Appx.}{Appxs.}
\crefShortenAdd\cCrefname{table}{Tab.}{Tabs.}
\crefShortenAdd\cCrefname{figure}{Fig.}{Figs.}
\crefname{footnote}{footnote}{footnotes}
\Crefname{footnote}{Footnote}{Footnotes}
\newcommand{\newrestatable}[1]{%
  \csdef{restatable:#1}{\restatable@begin{}{#1}}
  \csdef{restatable*:#1}{\restatable@begin{*}{#1}}
  \csdef{endrestatable:#1}{\csuse{endrestatable}}
  \csdef{endrestatable*:#1}{\csuse{endrestatable*}}
}
\theoremstyle{acmplain}
\newtheorem{theorem}{Theorem}[section]
\crefShortenAdd\cCrefname{theorem}{Thm.}{Thms.}
\crefShortenAdd\cCrefname{lemma}{Lem.}{Lems.}
\crefShortenAdd\cCrefname{proposition}{Prop.}{Props.}
\crefShortenAdd\cCrefname{corollary}{Cor.}{Cors.}
\crefShortenAdd\cCrefname{conjecture}{Conj.}{Conjs.}
\theoremstyle{acmdefinition}
\newtheorem{definition}[theorem]{Definition}
\crefShortenAdd\cCrefname{definition}{Def.}{Defs.}
\crefShortenAdd\cCrefname{example}{Ex.}{Exs.}
\crefShortenAdd\cCrefname{assumption}{Asm.}{Asms.}
\crefShortenAdd\cCrefname{remark}{Rmk.}{Rmks.}
\newcommand{\@thmheadcase}[3]{\thmname{#1}\thmnumber{ #2}\thmnote{: #3}}
\pretocmd{\case}{\let\thmhead\@thmheadcase}{}{}
\pretocmd{\proof}{\setcounter{case}{0}}{}{}
\def\restatable@begin@aux#1#2[#3]#4#5{%
  \ifx#4\label
  \else
    \PackageError{prelude}
      {Missing label in restatable theorem environment (restatable#1:#2)}
      {The first thing in the environment should be a label.
        You can use the label name with the restate command to restate the theorem.}
  \fi
  \csuse{restatable#1}[#3]{#2}{#5}\label{#5}
}
\newcommand{\restatable@begin}[2]{\@ifnextchar[{\restatable@begin@aux{#1}{#2}}{\restatable@begin@aux{#1}{#2}[]}}
\newcommand{\restate}{%
    \phantomsection
    \@ifstar{\restate@star@checkref}{\restate@nostar@checkref}%
}
\newcommand{\restate@star@checkref}{%
  \@ifnextchar{\ref}{\expandafter\restate@star\@gobble}{\restate@star}}
\newcommand{\restate@nostar@checkref}{%
  \@ifnextchar{\ref}{\expandafter\restate@nostar\@gobble}{\restate@star}}
\newcommand{\restate@star}[1]{%
  \ifcsdef{#1}{\csuse{#1}*}{\PackageError{prelude}
    {Attempted to restate an undeclared theorem/lemma/etc.}
    {You need to use a restatable:theorem environment to declare a theorem before restating it}}}
\newcommand{\restate@nostar}[1]{%
  \ifcsdef{#1}{\csuse{#1}}{\PackageError{prelude}
    {Attempted to restate an undeclared theorem/lemma/etc.}
    {You need to use a restatable:theorem environment to declare a theorem before restating it}}}
\let\label=\@gobble}
\let\label=\gobbled@cleveref@label}
\newcommand\gobbled@cleveref@label[2][]{}
\newcommand{\mvert}{|}
\newcommand{\given}{\nonscript\;\mvert\nonscript\;\mathopen{}}
\newcommand{\resizegiven}{%
  \renewcommand{\given}{\nonscript\;\delimsize\mvert\nonscript\;\mathopen{}}}
\DeclarePairedDelimiterX\gp[1](){\resizegiven #1}
\DeclarePairedDelimiterX\sqgp[1][]{\resizegiven #1}
\DeclarePairedDelimiterX\curlgp[1]\{\}{\resizegiven #1}
\DeclarePairedDelimiterX\vgp[1]\lvert\rvert{\resizegiven #1}
\DeclarePairedDelimiterX\Vgp[1]\lVert\rVert{\resizegiven #1}
\DeclarePairedDelimiterX\angp[1]\langle\rangle{\resizegiven #1}
\newcommand{\afterScripts}[1]{%
  \@ifnextchar_{\afterScripts@sub{#1}}{%
    \@ifnextchar^{\afterScripts@sup{#1}}{%
      #1}}}
\def\afterScripts@sub#1_#2{_{#2}\afterScripts{#1}}
\def\afterScripts@sup#1^#2{^{#2}\afterScripts{#1}}
\newrobustcmd{\P}{\mathbb{P}\afterScripts{\sqgp}}
\newrobustcmd{\E}{\mathbb{E}\afterScripts{\sqgp}}
\newrobustcmd{\gfliptoggle}[1]{%
    \iftoggle{#1}{Q\global\togglefalse{#1}}{R\global\toggletrue{#1}}}
\newcommand{\notemaybemargin}[1]{\@ifstar{\noteyesmargin{#1}}{\notenomargin{#1}}}
\newcommand{\notenomargin}[2]{{#1{#2}}}
\newcommand{\noteyesmargin}[2]{\marginpar{\scriptsize\raggedright #1{#2}}}
\gdef\@copyrightpermission{
  \begin{minipage}{0.2\columnwidth}
   \href{https://creativecommons.org/licenses/by/4.0/}{\includegraphics[width=0.90\textwidth]{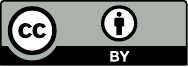}}
  \end{minipage}\hfill
  \begin{minipage}{0.8\columnwidth}
   \href{https://creativecommons.org/licenses/by/4.0/}{This work is licensed under a Creative Commons Attribution International 4.0 License.}
  \end{minipage}
  \vspace{5pt}
}
\newcommand{\lwl}{{\mathrm{LWL}}}
\newcommand{\sitae}{{\textnormal{SITA-E}}}
\newcommand{\sitao}{{\textnormal{SITA-O}}}
\newcommand{\card}{{\mathrm{CARD}}}
\newcommand{\all}{{\mathsf{all}}}
\newcommand{\mgone}{\mathrm{M/G/1}}
\newcommand{\choice}{{\mathsf{choice}}}
\renewcommand{\d}{\mathop{}\!\mathrm{d}}
\newcommand{\noqed}{\renewcommand{\qed}{}}
\newcommand{\I}{\mathcal{I}}
\newcommand{\eqnote}[2]{\stackrel{\mathclap{\textrm{#1}}}{#2}}
\begin{document}

\title{Heavy-Traffic Optimal Size- and State-Aware Dispatching}

\author{Runhan Xie}
\email{runhan\_xie@berkeley.edu}
\affiliation{%
\institution{University of California, Berkeley}
\department{Department of Industrial Engineering and Operations Research}
\city{Berkeley, CA}
\country{USA}
}

\author{Isaac Grosof}
\email{igrosof@cs.cmu.edu}
\affiliation{%
\institution{Carnegie Mellon University}
\department{Computer Science Department}
\city{Pittsburgh, PA}
\country{USA}
}
\affiliation{%
\institution{Georgia Institute of Technology}
\department{School of Industrial and Systems Engineering}
\city{Atlanta, GA}
\country{USA}
}

\author{Ziv Scully}
\email{zivscully@cornell.edu}
\affiliation{%
\institution{Cornell University}
\department{School of Operations Research and Information Engineering}
\city{Ithaca, NY}
\country{USA}
}

\begin{abstract}
Dispatching systems, where arriving jobs are immediately assigned to one of multiple queues, are ubiquitous in computer systems and service systems. A natural and practically relevant model is one in which each queue serves jobs in FCFS (First-Come First-Served) order. We consider the case where the dispatcher is \emph{size-aware}, meaning it learns the size (i.e. service time) of each job as it arrives; and \emph{state-aware}, meaning it always knows the amount of work (i.e. total remaining service time) at each queue. While size- and state-aware dispatching to FCFS queues has been extensively studied, little is known about \emph{optimal} dispatching for the objective of minimizing mean delay. A major obstacle is that no nontrivial lower bound on mean delay is known, even in heavy traffic (i.e. the limit as load approaches capacity). This makes it difficult to prove that any given policy is optimal, or even heavy-traffic optimal.

In this work, we propose the first size- and state-aware dispatching policy that provably minimizes mean delay in heavy traffic. Our policy, called \emph{CARD (Controlled Asymmetry Reduces Delay)}, keeps all but one of the queues short, then routes as few jobs as possible to the one long queue. We prove an upper bound on CARD's mean delay, and we prove the first nontrivial lower bound on the mean delay of any size- and state-aware dispatching policy. Both results apply to any number of servers. Our bounds match in heavy traffic, implying CARD's heavy-traffic optimality. In particular, CARD's heavy-traffic performance improves upon that of LWL (Least Work Left), SITA (Size Interval Task Assignment), and other policies from the literature whose heavy-traffic performance is known.
\end{abstract}

\begin{CCSXML}
<ccs2012>
<concept>
<concept_id>10002944.10011123.10011674</concept_id>
<concept_desc>General and reference~Performance</concept_desc>
<concept_significance>500</concept_significance>
</concept>
<concept>
<concept_id>10002950.10003648.10003700.10003701</concept_id>
<concept_desc>Mathematics of computing~Markov processes</concept_desc>
<concept_significance>500</concept_significance>
</concept>
<concept>
<concept_id>10003752.10003809.10003636.10003811</concept_id>
<concept_desc>Theory of computation~Routing and network design problems</concept_desc>
<concept_significance>300</concept_significance>
</concept>
<concept>
<concept_id>10003033.10003079.10003080</concept_id>
<concept_desc>Networks~Network performance modeling</concept_desc>
<concept_significance>300</concept_significance>
</concept>
<concept>
<concept_id>10003033.10003079.10011672</concept_id>
<concept_desc>Networks~Network performance analysis</concept_desc>
<concept_significance>300</concept_significance>
</concept>
</ccs2012>
\end{CCSXML}

\ccsdesc[500]{General and reference~Performance}
\ccsdesc[500]{Mathematics of computing~Markov processes}
\ccsdesc[300]{Theory of computation~Routing and network design problems}
\ccsdesc[300]{Networks~Network performance modeling}
\ccsdesc[300]{Networks~Network performance analysis}

\keywords{dispatching, FCFS, response time, latency, sojourn time, heavy traffic, asymptotic optimality}

\maketitle

\section{Introduction}
\label{sec:introduction}

Dispatching, or load balancing, is at the heart of many computer systems, service systems, transportation systems, and systems in other domains. In such systems, jobs arrive over time, and each job must be irrevocably sent to one of multiple queues as soon as it arrives. It is common for each queue to be served in First-Come First-Served (FCFS) order.

Motivated by the ubiquity of dispatching, we study a classical problem in dispatching theory:
\begin{quote}
    How should one dispatch to FCFS queues to minimize jobs' mean response time?\/\footnote{%
    A job's \emph{response time} (a.k.a. sojourn time, latency, delay) is the amount of time between its arrival and its completion.}
\end{quote}
We specifically consider \emph{size- and state-aware dispatching}. This means that the dispatcher learns a job's \emph{size}, or service time, when the job arrives; and the dispatcher always knows how much \emph{work}, or total remaining service time, there is at each queue. We make typical stochastic assumptions about the job arrival process, working with M/G arrivals (see \cref{sec:model}).

Despite the extensive literature on dispatching in queueing theory (see \cref{sec:prior_work}), optimal size- and state-aware dispatching is an open problem, as highlighted by \citet{hyytia2022routing}. The problem is a Markov decision process (MDP), so it can in principle be approximately solved numerically \citep{hyytia2023dynamic}. But the numerical approach has two drawbacks. First, the curse of dimensionality makes computation impractical for large numbers of queues. Second, the solution is specific to a particular instance (meaning a given number of queues, job size distribution, and load) and one has to solve the MDP again for a different instance.

\subsection{Our contributions}
In this work, we take the first steps towards developing a theoretical understanding of optimal size- and state-aware dispatching, making two main contributions.
\* We give the first lower bound on the minimum mean response time achievable under any dispatching policy (\cref{thm:rt_lower}).
\* We propose a new dispatching policy, called \emph{CARD (Controlled Asymmetry Reduces Delay)}, and prove an asymptotically tight upper bound on its mean response time (\cref{thm:rt_upper_heavy}). We illustrate CARD in \cref{fig:policy_intro}.
\*/
Our upper and lower bounds match in the heavy-traffic limit as load~$\rho$ approaches~$1$, the maximum load capacity. Specifically, we find an explicit constant~$K$ such that the dominant term of both bounds is $\frac{K}{1 - \rho}$. This makes CARD the first policy to be proven heavy-traffic optimal, aside from the implicitly specified optimal policy. Characterizing the optimal constant~$K$, which was previously unknown, is another contribution of our work.

\subsubsection{How CARD outperforms previous policies}
\begin{figure}
    \begin{minipage}[b]{0.48\linewidth}
        \centering
        \begin{tikzpicture}[xscale=3.95, yscale=3.95, font=\small, every node/.style={inner sep=0,outer sep=2pt}]
\draw[->] (0, 0) -- ++(1, 0) node[right, align=center] {long queue\\work $W_l$};
\draw[->] (0, 0) -- ++(0, 0.65) node[above, align=center] {short queue\\work $W_s$};

\draw[thick, densely dashed, olive] (0, 0.3) -- ++(1, 0);
\draw (0, 0.3) -- ++(-0.03, 0) node[left] {$c$};

\node[align=center] at (0.5, 0.15) {\textit{When $W_s \leq c$:}\\medium $\to$ short};
\node[align=center] at (0.5, 0.45) {\textit{When $W_s > c$:}\\medium $\to$ long};

\begin{scope}[shift={(-0.025, 0)}]
\node at (0.625, -0.125) {\textit{Load division:}};
\draw (-0.15, -0.325) -- ++(0, -0.03) node[below] {\strut \clap{\hphantom{size}size $0$}};
\draw[fill=lime] (-0.15, -0.2) rectangle (0.45, -0.325) node[midway] {\strut small $\to$ short};
\draw (0.45, -0.325) -- ++(0, -0.03) node[below] {\strut size $m_-$};
\draw[fill=yellow] (0.45, -0.2) rectangle (0.8, -0.325) node[midway] {\strut medium};
\draw (0.8, -0.325) -- ++(0, -0.03) node[below] {\strut size $m_+$};
\draw[fill=pink] (0.8, -0.2) rectangle (1.4, -0.325) node[midway] {\strut large $\to$ long};
\draw (1.4, -0.325) -- ++(0, -0.03) node[below] {\strut \clap{size $\infty$\hphantom{size}}};
\end{scope}

\end{tikzpicture}
    \end{minipage}\hfill
    \begin{minipage}[b]{0.48\linewidth}
        \centering
        \includegraphics[width=\linewidth]{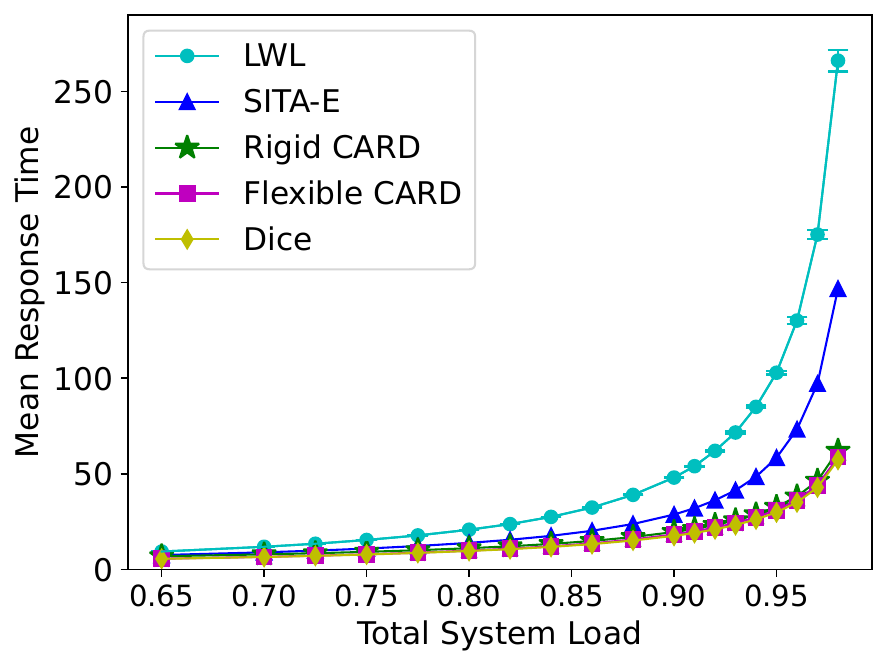}
    \end{minipage}\\[-\baselineskip]
    \begin{minipage}[t]{0.48\linewidth}
        \caption{Sketch of the CARD policy for two servers. Small and large jobs are always dispatched to the short or long server, respectively. Medium jobs are dispatched based on whether $W_s$, the amount of work at the short server, exceeds a threshold~$c$. The size cutoffs $m_-$ and $m_+$ are chosen so that small and large jobs each constitute slightly less than half the load.}
        \label{fig:policy_intro}
    \end{minipage}\hfill
    \begin{minipage}[t]{0.48\linewidth}
        \caption{Mean response time as a function of load for several policies, including two versions of CARD. \emph{Rigid CARD} is the version we theoretically analyze, while \emph{Flexible CARD} is modified slightly to improve empirical performance. The job size distribution has coefficient of variation $\mathsf{cv} = 10$. See \cref{sec:simulation} and \cref{fig:two-server-policies-comparison}(b) for further details.}
        \label{fig:simulation_intro}
    \end{minipage}
\end{figure}

Below, we describe the intuition behind CARD's design in a two-server system. See \cref{fig:policy_intro} for an illustration.

To minimize mean response time, one generally wants to avoid situations where small jobs need to wait behind large jobs. One way to do this is to dedicate one server to small jobs and the other server to large jobs, where the size cutoff between ``small'' and ``large'' is defined such that half the load is due to each size class. This is the approach taken by the SITA (Size Interval Task Assignment) policy \citep{harchol1999choosing, harchol2009surprising}. Under SITA, due to Poisson splitting, the dispatching system reduces to two independent M/G/1 systems. As shown by \citet{harchol2009surprising}, SITA can sometimes perform very well, but it can sometimes be much worse than simple LWL (Least Work Left) dispatching, under which the system behaves like a central-queue M/G/2.

As each of LWL and SITA can sometimes be worse than the other in heavy traffic, one might expect that they can be strictly improved upon. Indeed, in \cref{sec:proofs:suboptimality}, we show that in the two-server case, both LWL and SITA are strictly suboptimal in heavy traffic. But the question remains: where in LWL or SITA's design is there a specific opportunity for improvement?

Our key observation is that the main reason SITA performs poorly is that its ``short server'', namely the queue to which it sends small jobs, can accumulate lots of work. CARD avoids this issue by actively regulating the amount of work at the short server. To do so, CARD creates a third class of ``medium'' jobs, which are on the border between small and large, and sets a threshold which serves as a target amount of work at the short server. Whenever a medium job arrives, CARD dispatches it to the short server if and only if the short server has less work than the threshold. This prevents too much work accumulating in the short server, and it also prevents the short server from unduly idling.

\subsubsection{CARD's performance beyond heavy traffic}
Of course, practical systems rarely operate at loads very near capacity, but our theoretical bounds on CARD's performance are admittedly not tight outside the heavy-traffic regime. As such, we also study CARD in simulation across a wider range of loads. We find empirically that CARD has good performance outside of heavy traffic, but slightly modifying CARD can significantly improves performance. Both the original and modified versions of CARD improve upon traditional heuristics like LWL and SITA, sometimes by an order of magnitude. The modified version is competitive with the Dice policy of \citet{hyytia2022sequential}, the best known heuristic for the size- and state-aware setting. See \cref{fig:simulation_intro} for an example where at high load, CARD achieves reductions of over 75\% relative to LWL and over 50\% compared to SITA.

\subsubsection{Outline}

The remainder of the paper is organized as follows.
\* \cref{sec:prior_work} reviews related work.
\* \cref{sec:model} presents our model and defines the CARD policy.
\* \cref{sec:main_results} states our main results and gives some intuition for why they hold.
\* \cref{sec:lower, sec:stability, sec:upper} prove our results:
    a lower bound on the performance, namely mean response time, of any policy (\cref{sec:lower});
    stability of CARD (\cref{sec:stability}); and
    an upper bound on CARD's performance, which implies its heavy-traffic optimality ($n = 2$ servers in \cref{sec:upper}, general case in \cref{sec:proofs:multi}).
\* \cref{sec:simulation} studies CARD outside of heavy traffic via simulation.
\*/

We note that a preliminary version of this work appeared as a three-page workshop abstract \citep{xie2023reducing}, but it was extremely limited compared to the current version: it treated only the case of two servers and two job sizes, it did not provide any lower bound, and it omitted all proofs.

\subsection{Related Work}
\label{sec:prior_work}

\subsubsection{FCFS dispatching with incomplete information}
Whether a dispatching policy is optimal depends critically on the information available to the dispatcher. When the size of the arriving job is unknown, but server states (e.g. number of jobs at each server, work at each server, etc.) are known (state-aware), depending on the server-state information, Round-Robin (RR) \cite{ephremides1980simple,liu1994optimality,liu1998optimal}, Join-Shortest-Queue (JSQ) \cite{weber1978optimal,winston1977optimality}, and LWL \cite{daley1987certain,foss1980approximation,koole1992optimality,akgun2013partial} are shown to be optimal. The common key idea of these policies is to join the queue with least (or least expected) amount of work.

When only the sizes and the distribution of the arriving jobs are known, SITA is known to be optimal \cite{feng2005optimal}. But this result assumes that the dispatching policy must be entirely static. Recently, it was shown that combining SITA with RR can improve performance \citep{anselmi2019combining,hyytia2020star}, which combines SITA with just a little bit of memory, namely which servers most recently received a job.

Perhaps the closest the SITA line of work gets to size- and state-aware dispatching is the SITA-JSQ policy proposed in \citet{wang2014resource}, in which the dispatcher uses the size of the arriving job and number of jobs at each server to make dispatching decisions. CARD is in some ways similar to SITA-JSQ, particularly the ``multi-band'' variant of CARD introduced in our simulation study (\cref{sec:simulation}). But SITA-JSQ does not actively control the amount of work in each queue, and in particular does not maintain a large imbalance between queues. Our lower bound (\cref{sec:lower}) shows this imbalance is necessary for heavy-traffic optimality.

\subsubsection{FCFS size- and state-aware dispatching}
For size- and state-aware FCFS dispatching, various heuristics have been proposed and studied in simulations. Many of them are based on approximate dynamic programming e.g. \citep{hyytia2012size,hyytia2013lookahead,hyytia2023dynamic}. Another class of policies, called sequential dispatching policies, are introduced in \citep{hyytia2022routing}. Among the sequential dispatching policies, Dice \citep{hyytia2022routing} shows superior performance in simulations and is among the best heuristics that has been developed. In our simulations (\cref{sec:simulation}), Dice often slightly outperforms CARD. However, there is no theoretical analysis so far on the performance of Dice, even in heavy traffic.

\subsubsection{Heavy-traffic Optimality Results}
The aforementioned optimality results are strong in the sense that they either show stochastic ordering optimality on sample paths, or show optimality for any load of jobs. For more complicated policies and systems, characterizing the mean response time for an arbitrary load is a difficult task. Therefore, a large number of works focus on analyzing the heavy-traffic regime and establish optimality therein. One approach is to prove optimality via process limits e.g. \citep{kelly1993dynamic,teh2002critical}. Such approach focuses on the transient regime and interchange of limits are usually not established for analysis in steady state. Another approach is to work directly in the stationary regime and establish heavy-traffic optimality results on mean response times in steady state e.g. \citep{eryilmaz2012asymptotically,zhou2017designing,zhou2018heavy}. However, these optimality results focus on settings where job sizes are unknown, so they do not address our goal of optimal size-aware dispatching.

\subsubsection{Tools and Methodology}
Recently, \citet{eryilmaz2012asymptotically} introduced and popularized a Lyapunov drift-based approach that is applied to study the steady-state performance of
queueing systems in heavy traffic. The approach has been adopted in studying various switches (e.g. \cite{maguluri2016heavy,wang2017heavy,maguluri2018optimal,lange2019heavy,jhunjhunwala2021low,hurtado2022logarithmic,jhunjhunwala2023heavy}), load-balancing algorithms (e.g. \cite{zhou2017designing,zhou2018heavy,zhou2019flexible,weng2020optimal,hurtado2022load,liu2022steady}), and other stochastic models (e.g. wireless scheduling, Stein's method, mean-field models). In some sense, our paper applies drift method to continuous-time continuous state Markov processes. Our use of the Rate Conservation Law \citep{miyazawa1994rate} parallels the use of ``zero drift'' condition in drift analysis. An important step in drift analysis is establishing state-space collapse. We prove a result of this type in \cref{thm:ssc_short}.

\subsubsection{Other Relevant Work}
When scheduling is allowed at the servers, optimal dispatching policies can be very different. When there are multiple parallel SRPT servers, \citet{down2006multi} study a multi-layer dispatching policy and show optimality using a diffusion limit argument. \citet{grosof2019load} develop a dispatching policy, called \emph{Guardrails}, that achieves optimal mean response time in heavy traffic.\footnote{Guardrails is optimal in the sense that the mean response time under Guardrails matches that of a resource-pooled SRPT in heavy-traffic. As \Cref{thm:rt_lower} suggests, dispatching to FCFS servers cannot in general match the performance of a resource-pooled SRPT in heavy-traffic. Therefore, in heavy-traffic, an optimal dispatching policy to SRPT servers generally outperforms an optimal policy to FCFS servers in terms of mean response time.} Both of these prior dispatching policies involve, roughly speaking, balancing work evenly across the multiple SRPT servers. This is in contrast to CARD, which maintains a large imbalance between the multiple FCFS servers. One interpretation is that while SRPT prioritizes jobs at each individual server, CARD prioritizes jobs at the dispatching stage, namely by sending shorter jobs to servers with less work.

In recent years, learning-based dispatching policies have also been studied in literature \citep{samuelsson2018applying,fu2022joint}. CARD involves tuning some parameters that depend on the job size distribution, and we thus assume knowledge of the job size distribution. An interesting question for future work is whether CARD's parameters could be learned online in settings where the job size distribution is unknown.

In the context of scheduling jobs on a single server, when SRPT (Shortest Remaining Processing Time) is shown to be optimal \citep{schrage1968proof}, \citet{chen2021scheduling} show that having two priority classes is sufficient for a good performance in heavy traffic. The heavy-traffic performance of CARD ends up roughly equivalent to the performance of a single-server system with two priority classes. However, we cannot match the performance demonstrated by \citet{chen2021scheduling}: they decrease the fraction of load in the lower-priority class to zero in heavy traffic, whereas CARD's ``lower-priority jobs'', namely those sent to the long server, must constitute a roughly $\frac{1}{n}$ fraction of the load.

\section{System Model and the CARD Policy}
\label{sec:model}

\subsection{Model Description}
\label{sec:model:basics}

We consider a system of $n \geq 2$ identical FCFS (First-Come, First-Served) servers, each of which has its own queue. The system has one central dispatcher, which immediately dispatches jobs to a server when they arrive. We consider M/G job arrivals with (Poisson) arrival rate $\lambda$ and job size distribution~$S$. We assume $\E{S^2} < \infty$. The system load, namely the average rate at which work arrives, is $\rho = \lambda \E{S}$. We assume a server never idles unless there are no jobs present in its queue.

We use the convention that each server completes work at rate~$\frac{1}{n}$, so a job of size~$s$ requires $ns$ time in service.  This convention means the largest possible stability region is $\rho \in [0, 1)$, regardless of the number of servers~$n$. The convention is also convenient when comparing our system's performance to that of a ``resource-pooled'' M/G/1 with the same arrival process and server speed~$1$. We write $\E{W_\mgone}$ for the mean amount of work in such a resource-pooled M/G/1.

We consider \emph{size- and state-aware} dispatching policies. That is, when a job arrives, the dispatcher may use both the job's size and the system state to decide where to dispatch it to. For our purposes, the most important aspect of the system state is the amount of \emph{work} remaining at each server. We write $W_i$ for the amount of work at server~$i$ (but see also \cref{sec:model:card}), $\mathbf{W} = (W_1, \dots, W_n)$ for the vector of work amounts, and $W_\all = \sum_{i = 1}^n W_i$ for the total work. We write $W_i(t)$ or $\mathbf{W}(t)$ when discussing work at a specific time~$t$.

The main metric we consider is \emph{mean response time}. A job's response time is the amount of time between its arrival and completion. Due to our $\frac{1}{n}$ service rate convention, if a job of size~$s$ is dispatched to a server with $w$ work, the job's response time is $n(w + s)$. We write $\E{T_\pi}$ for the mean response time over all jobs (in the usual limiting long-run average sense) under policy~$\pi$.

Purely for simplicity of notation, we assume the job size distribution $S$ has no atoms. This is to ensure that expressions like $\E{S \I(S < m)}$ are continuous functions of~$m$. One can generalize all of our definitions and results to distributions with atoms using a lexicographic ordering trick.\footnote{%
    Have the system assign each job an i.i.d. uniform $U \in [0, 1]$ independent of its size~$S$, and replace comparisons $S < m$ with comparisons $(S, U) \prec (m, v)$ for some $v \in [0, 1]$, where $\prec$ is the lexicographic order. If $\E{S \I((S, U) \prec (m, v))}$ has a jump discontinuity at~$m$, varying $v$ interpolates continuously between the left and right limits.}

\subsection{Defining the CARD Policy}
\label{sec:model:card}

We now introduce our policy, \emph{CARD}, which stands for \emph{Controlled Asymmetry Reduces Delay}. We first present it in the context of $n=2$ servers, then generalize to $n \geq 2$ servers.

\subsubsection{CARD for two servers}
In the $n=2$ case, CARD designates server~1 as the \emph{short server} and server~2 as the \emph{long server}. To emphasize this, when discussing CARD, we write $W_s = W_1$ and $W_\ell = W_2$ for the work at the short and long servers, respectively.

CARD has three threshold parameters to set:
\* The two \emph{size thresholds} $m_-$ and $m_+$, $0 \leq m_- \leq m_+$, divide jobs into small, medium, and large (see below).
\* The \emph{work threshold} $c$, $c \geq m_+$ is, roughly speaking, a target work level for the short server.
\*/
Based on these parameters, CARD dispatches jobs as follows (see also \cref{fig:policy_intro}):
\* A \emph{small job}, namely one with size in $[0, m_-)$, is always dispatched to the short server.
\* A \emph{medium job}, namely one with size in $[m_-, m_+)$, is dispatched depending on $W_s$ at time of arrival. If $W_s \leq c$, it is sent to the short server, and if $W_s > c$, it is sent to the long server.
\* A \emph{large job}, namely one with size in $[m_+, \infty)$, is always dispatched to the long server.
\*/

\subsubsection{Setting CARD's parameters}
There are a range of ways to set $m_-$, $m_+$, and $c$ that yield stability and heavy-traffic optimality. We specify these formally in the statements of \cref{thm:stability, thm:rt_upper_heavy}, but we highlight the key points here (see also \cref{sec:model:quantities}).

The size thresholds $m_-$ and~$m_+$ should be chosen such that small jobs and large jobs are each less than half the load. Formally, we require
\[
    \E{S \I(S < m_-)} < \tfrac{1}{2} \E{S} < \E{S \I(S < m_+)}.
\]
In particular, we have $m_- < m < m_+$, where $m$ is the solution to $\E{S \I(S < m)} = \tfrac{1}{2} \E{S}$. As we show in our lower bound (\cref{thm:rt_lower}), this value~$m$ is in some sense the ideal cutoff between small and large jobs. As such, it is important that in heavy traffic, either $m_- \to m$ or $m_+ \to m$ (or both). We do the former in our upper bound (\cref{thm:rt_upper_heavy}).

The work threshold~$c$ must balance a tradeoff between two concerns. On one hand, we want there to be little work at the short server so that small jobs have low response times. On the other hand, we do not want the short server to run out of work, as excessive idling could increase response times or even cause instability. Roughly speaking, this means setting $c = \Theta\gp[\big]{\gp[\big]{\frac{1}{1 - \rho}}^p}$ for a suitable choice of $p \in (0, 1)$.

It is convenient in our proofs to ensure $c \geq m_+$, so we assume this throughout. It also makes intuitive sense that a single medium job should not bring the short server from empty to above the work threshold. However, this assumption can be easily relaxed at the cost of a little more computation in the proofs.

\subsubsection{Generalizing CARD to any number of servers}
We now generalize the above policy to $n\geq2$ servers. Here we focus on an extension that prioritizes simplicity of analysis while still achieving optimal heavy-traffic performance. In our simulation study (\cref{sec:simulation}), we consider a more complex variant which has better performance at practical loads.

The basic idea of $n$-server CARD is to reduce to the two-server case. We use the same three parameters $m_-$, $m_+$, and~$c$, and we define small, medium, and large jobs in the same way. The only difference is that instead of one short and one long server, we use $n - 1$ short servers $1, \dots, n - 1$ and a single long server~$n$. We thus write $W_{s_i} = W_i$ and $W_\ell = W_n$ when discussing $n$-server CARD. Abusing notation slightly, we write simply $W_s$ when discussing a generic short server whose index is not important.
Jobs are dispatched as follows:
\* A small job is always dispatched to a uniformly random short server.
\* A medium job is dispatched as follows. The dispatcher selects a uniformly random short server $N \in \{1, \dots, n - 1\}$ and inspects its amount of work~$W_{s_N}$. If $W_{s_N} \leq c$, the job is dispatched to the chosen short server~$N$, and if $W_{s_N} > c$, it is dispatched to the long server.
\* A large job is always dispatched to the long server.
\*/

Another way to view $n$-server CARD is in the following distributed manner. Suppose that instead of one dispatcher, we have $n - 1$ independent ``subdispatchers'', each associated with a short server, and suppose that all jobs arrive at a uniformly random dispatcher. Then $n$-server CARD is the result of each of the subdispatchers using two-server CARD, except they all share the same long server.

The way we set the parameters of $n$-server CARD is essentially the same as how we set the parameters of two-server CARD. The only difference is that instead of wanting small and large jobs to both have less than half the load, we want small jobs to be less than a $1 - \frac{1}{n}$ fraction of the load, and we want large jobs to be less than a $\frac{1}{n}$ fraction of the load. We therefore set
\[
    \E{S \I(S < m_-)} < \gp*{1 - \frac{1}{n}} \E{S} < \E{S \I(S < m_+)}.
\]
This means $m_- < m < m_+$, where now $m$ is the solution to $\E{S \I(S < m)} = \gp[\big]{1 - \frac{1}{n}} \E{S}$.

\subsection{Key Definitions for Main Results and Analysis}
\label{sec:model:quantities}

We state our main results and perform our analysis in terms of the following quantities.

\subsubsection{Drift-related quantities}
The following quantities are related to characterizing \emph{drifts}, which are the average rates at which work increases or decreases in various situations.
\* Let $\epsilon = 1 - \rho$. If both servers are busy, then $W_\all$ has drift $-\epsilon$.
\* Let $\rho_s$, $\rho_m$, and $\rho_\ell$ be the loads due to small, medium, and large jobs, respectively:
\[\rho_s=\lambda\E{S\I(S < m_-)},\qquad\rho_m=\lambda\E{S\I(m_- \leq S < m_+)},\qquad\rho_\ell=\lambda\E{S\I(S\geq m_+)}.\]
\* Let $\alpha$ and $\beta$ be the following quantities related to the drift of $W_s$:
\[\alpha=\frac{1}{n}-\frac{1}{n-1}\rho_s,\qquad\beta=\frac{1}{n-1}(\rho_s+\rho_m)-\frac{1}{n}.\]
If $W_s > c$, then $W_s$ has drift $-\alpha$, and if $0<W_s \leq c$, then $W_s$ has drift~$+\beta$.
\* Let $\delta \in (0, \epsilon]$ be a bound on the probability the short server is idle, i.e. $\P{W_s = 0} \leq \delta$. We show how to set CARD's parameters to achieve this bound in \cref{thm:stability}(a).
\*/

To specify CARD's $m_-$ and $m_+$ parameters, it suffices to specify $\alpha$ and~$\beta$: these determine $\rho_s$ and~$\rho_m$, which in turn determine  $m_-$ and~$m_+$. Moreover, for any given~$\beta$, we show in \cref{thm:stability} how to set CARD's $c$ parameter to achieve $\P{W_s = 0} \leq \delta$. As such:
\begin{quote}
    Instead of specifying $m_-$, $m_+$, and $c$ directly, we specify $\alpha$, $\beta$, and~$\delta$.
\end{quote}
In particular, \cref{thm:rt_upper_heavy} specifies how $\alpha$, $\beta$, and~$\delta$ should scale as functions of~$\epsilon$.

\subsubsection{Heavy traffic}
Our main results consider the $\epsilon \downarrow 0$ limit, which we call the \emph{heavy-traffic} regime. This is equivalent to $\lambda \uparrow 1/\E{S}$. In particular, we leave the number of servers fixed.

Underlying our results are explicit bounds that hold even outside the limiting regime (see e.g. \cref{thm:rt_upper_explicit}). Because of our focus on heavy traffic, we assume for convenience that $\epsilon < \frac{1}{n}$. In particular, this ensures we can set $\beta > 0$, which ensures that $W_s$ always drifts towards~$c$. The case where $\epsilon > \frac{1}{n}$ and $\beta < 0$ is less interesting, as then both $W_s$ and $W_\ell$ always drift towards~$0$.

\subsubsection{Performance-related quantities}
The following quantities are used in our response time bounds (\cref{thm:rt_lower, thm:rt_upper_heavy}). Define $K_\card$ and $m$ such that
\[
    \label{eq:Kcard}
    K_\card = \frac{\E{S}}{\E{S \given S \geq m}} = n \P{S \geq m}.
\]
This characterization of $m$ is equivalent to the aforementioned $\E{S \I(S < m)} = \gp[\big]{1 - \frac{1}{n}} \E{S}$. In \cref{thm:rt_upper_heavy}, we show that, roughly speaking, $\E{T_\card} \approx K_\card \E{W_\mgone}$, where
\[
    \E{W_\mgone} = \frac{\lambda \E{S^2}}{2 \epsilon}
\]
is the mean work in a resource-pooled M/G/1 (\cref{sec:model:basics}).

\section{Main Results and Key Ideas}
\label{sec:main_results}

We now present our main results, followed by some intuition for why they hold. See \cref{sec:lower, sec:stability, sec:upper} for the proofs, with some details deferred to \cref{sec:proofs}.

Our first result is a lower bound on the mean response time for any dispatching policy.

\begin{restatable:theorem}
    \label{thm:rt_lower}
    Under any dispatching policy~$\pi$ and for any $\epsilon\in(0,1)$,
    \[
        \E{T_\pi} \geq K_\card \E{W_\mgone} - \frac{(n - 1) \E{S^2}}{2 m} + n \E{S}.
    \]
\end{restatable:theorem}

\begin{proof}
    See \cref{sec:lower}.
    \noqed
\end{proof}

The rest of our results are about CARD: stability for all $\epsilon>0$, and heavy-traffic optimality as $\epsilon \downarrow 0$. Both results are stated as sufficient conditions on CARD's parameters under which it achieves the corresponding property. See \cref{sec:model:card, sec:model:quantities} for descriptions of and notation for CARD's parameters.

\begin{restatable:theorem}
    \label{thm:stability}
    Let $\delta > 0$, and consider CARD with threshold
    \[
        c = \frac{n(n-1)m_+}{\beta} \log \frac{n+1}{n \beta \delta}.
    \]
    Then,
    \*[(a)] Each short server satisfies $\P{W_s = 0} \leq \delta$.
    \* If $\delta < \frac{n}{n - 1}\epsilon$, then the system is stable. Specifically, the set $\{(0, \ldots, 0)\}$ is positive recurrent for the process $\mathbf{W}(t) = (W_1(t),\ldots, W_n(t))$.
    \*/
\end{restatable:theorem}

\begin{proof}
    See \cref{sec:stability, sec:proofs:stability}.
    \noqed
\end{proof}

\begin{restatable:theorem}
    \label{thm:rt_upper_heavy}
    For any fixed number of servers $n \geq 2$, if CARD's parameters are set such that
    \[
        \alpha = \Theta(1), \qquad
        \beta = \Theta\gp[\bigg]{\epsilon^{1/3} \gp[\bigg]{\log\frac{1}{\epsilon}}^{2/3}}, \qquad
        \text{and} \qquad
        c = \frac{n(n-1)m_+}{\beta} \log \frac{n+1}{n \beta \delta},
    \]
    in the $\epsilon\downarrow0$ limit, then CARD achieves mean response time bounded by
    \[
        \E{T_{\card}} \leq K_\card \E{W_\mgone} + O\gp[\bigg]{\gp[\bigg]{\frac{1}{\epsilon} \log\frac{1}{\epsilon}}^{1/3}}.
    \]
    In particular, CARD is heavy-traffic optimal: $\limsup_{\epsilon\downarrow0}\frac{\E{T_\card}}{\E{T_\pi}} \leq 1$ for any dispatching policy~$\pi$.
\end{restatable:theorem}

\begin{proof}
    See \cref{sec:upper} for the case of $n = 2$ servers and \cref{sec:proofs:multi} for the general case.
    \noqed
\end{proof}

\subsection{Intuition for Lower Bound on All Policies}

We now give some intuition for \cref{thm:rt_lower}. We focus on the heavy-traffic regime, where our aim is to show that the best possible mean response time is roughly $\E{T} \approx K_\card \E{W_\mgone}$.

To begin, recall $\E{T} = \frac{1}{\lambda} \E{N}$, where $\E{N}$ is the mean number of jobs in the system. The key idea is to relate $\E{N}$ to the mean amount of work $\E{W_\all}$. This is helpful because one can easily show $\E{W_\all} \geq \E{W_\mgone}$ (see e.g. \cref{thm:work_decomp}).

How can we relate $\E{N}$ to $\E{W_\all}$? In heavy traffic, most jobs in the system are waiting in a queue and have yet to enter service. We thus approximate $\E{N} \approx \E{W_\all} / \E{S_{\mathsf{queue}}}$, where $\E{S_{\mathsf{queue}}}$ is the mean size of jobs waiting in a queue. This means minimizing $\E{T}$ amounts to maximizing the mean size of jobs waiting in the queue. This makes sense in light of the fact that when studying scheduling policies beyond FCFS, serving small jobs ahead of large jobs reduces mean response time \citep{schrage1968proof}.

What is the largest that $\E{S_{\mathsf{queue}}}$ can be? Because we are restricted to FCFS service, the only mechanism by which we can affect the sizes of jobs in the system is dispatching. In particular, we can dispatch jobs of different sizes to different servers. Suppose, for example, that servers $1, \dots, n - 1$ have a negligible amount of work, meaning nearly all of the work is at server~$n$. Then $\E{S_{\mathsf{queue}}}$ would be the average size of jobs dispatched to server~$n$, which could be much greater than $\E{S}$. The best we could hope to do is $\E{S_{\mathsf{queue}}} = \E{S \given S \geq m}$ for as high a threshold~$m$ as possible. But in heavy traffic, we need server~$n$ to handle a $\frac{1}{n}$ fraction of the load, so the largest value of $m$ possible solves $\E{S \I(S \geq m)} = \frac{1}{n} \E{S}$. This is equivalent to the characterization of $m$ from~\cref{eq:Kcard}, so it leads to $\E{T} / \E{W_\all} \approx \E{S} / \E{S \given S \geq m} = K_\card$. Observing $\E{W_\all} \geq \E{W_\mgone}$ completes the bound.

To make this reasoning rigorous, it turns out that reasoning directly in terms of $\E{S_{\mathsf{queue}}}$ is difficult. We instead prove \cref{thm:rt_lower} using a potential-function approach. However, the potential function and manipulations we perform on it were directly inspired by the intuition:
\begin{quote}
    The best-case scenario is to dedicate one server to the jobs of size at least~$m$, and to ensure that all other servers have a negligible amount of work.
\end{quote}

\subsection{Intuition for Upper Bound on CARD}

We now give some intuition for \cref{thm:rt_upper_heavy}. By the lower bound intuition above, CARD is already well on its way to achieving the best-case scenario: it attempts to keep the amount of work at the $n - 1$ short servers near~$c$, and the long server only serves medium and large jobs. To show CARD matches the lower bound in heavy traffic, it would suffice to show the following.
\* CARD does not have much more work than a resource-pooled M/G/1: $\E{W_\all} \approx \E{W_\mgone}$.
\** Roughly speaking, this amounts to showing that we avoid situations where one server is idle while another server has lots of work (see \cref{thm:work_decomp}).
\* CARD's short servers do not exceed $c$ work by too much: $\E{W_s} \approx c$.
\** We also need to set $c$ such that it is negligible in heavy traffic.
\* CARD rarely dispatches medium jobs to the long server: $\P{W_s \leq c} \approx 1$.
\*/

Our main tool for showing these and related properties is examining what we call \emph{below-above cycles}. Consider a particular short server. It alternates between \emph{below periods}, during which $W_s \leq c$, and \emph{above periods}, during which $W_s > c$. It turns out that much of our analysis rests on below-above cycles not being too long. One reason for this is that when enough short servers are in above periods, the long server is temporarily overloaded. Long periods of transient overload could cause $\E{W_\all}$ to be significantly greater than $\E{W_\mgone}$. Short below-above cycles prevent this possibility. See \cref{sec:upper:ingredients} for more details about how we use below-above cycles.

\section{Universal Lower Bound}
\label{sec:lower}

\restate*\ref{thm:rt_lower}

Before diving into the proof, we give the high-level idea for $n = 2$ servers.

Suppose an arrival occurs while $W_1 < W_2$. For that individual arrival, its response time if it were sent to queue~$i$ would be~$2 W_i$ because each server processes work at rate $1/2$, so the ``benefit'' of sending it to queue~$1$ instead of queue~$2$ is $2(W_2 - W_1)$. Reasoning symmetrically if $W_1 < W_2$, we conclude that the benefit of dispatching jobs to the shorter queue is proportional to $|W_2 - W_1|$.

The main challenge is therefore to show that no dispatching policy can both frequently dispatch to the shorter queue, and also maintain large difference $|W_2 - W_1|$ between the queues. The key observation is that if we dispatch the job to the shorter queue, then $|W_2 - W_1|$ decreases, so the next arrival would see less benefit. That is, we can view $|W_2 - W_1|$ as a type of resource: dispatching jobs to the shorter queue depletes it, while dispatching jobs to the longer queue replenishes it. It is thus best to dispatch shorter jobs to the shorter queue, which slowly depletes $|W_2 - W_1|$, and dispatch longer jobs to the longer queue, which quickly replenishes $|W_2 - W_1|$. To formalize the idea of viewing $|W_2 - W_1|$ as a resource, we use the potential function $\tfrac{1}{2}(W_2 - W_1)^2$.

The proof below handles any number of servers~$n$. The idea is essentially the same as the $n = 2$ case, except we look at the work differences $|W_i - W_j|$ for every pair of servers~$i \neq j$.

\begin{proof}[Proof of \cref{thm:rt_lower}]
    Consider an arbitrary stationary dispatching policy~$\pi$. We first introduce notation for $\pi$'s dispatching decisions. Suppose a job of random size~$S$ arrives and observes work vector $\mathbf{W} = (W_1, \dots, W_n)$. We denote by $W_\choice$ the work at the queue the arrival is dispatched to. Note that while $S$ is independent of $\mathbf{W}$, it is \emph{not} independent of $W_\choice$. We also write $W_\all = \sum_{i = 1}^n W_i$ for the total work at all queues. Because each server does work at rate $1/n$, we can write $\E{T_\pi}$ as
    \[
        \E{T_\pi}
        = n \E{W_\choice + S}
        \label{eq:multi:rt_lower:delay}
        = \E{W_\all} + \E{n W_\choice - W_\all} + n \E{S}.
    \]

    The main task is to give a lower bound on $\E{n W_\choice - W_\all}$. To do so, we apply the rate conservation law of \citet{miyazawa1994rate} to $V(\mathbf{W})$, where
    \[
        V(\mathbf{w}) = \frac{1}{2} \sum_{i = 1}^n \sum_{j = 1}^{i - 1} (w_i - w_j)^2.
    \]
    The value of $V(\mathbf{W})$ can change in two ways.
    \* Work is done continuously at each nonempty queue. We denote this average continuous change by $\E{D_t V(\mathbf{W})}$.
    \* Arrivals add work to whichever queue the dispatcher chooses. By PASTA (Poisson Arrivals See Time Averages) \citep{wolff1982poisson}, this yields average change $\lambda \E{V(\mathbf{W} + S \mathbf{e}_\choice) - V(\mathbf{W})}$, where $\mathbf{e}_\choice$ is the standard basis vector with a $1$ indicating the queue the job is dispatched to.
    \*/
    The rate conservation law \citep{miyazawa1994rate} states that the average rate of change of $V(\mathbf{W})$ is zero, so
    \[
        \label{eq:multi:rt_lower:rcl}
        \E{D_t V(\mathbf{W})} + \lambda \E{V(\mathbf{W} + S \mathbf{e}_\choice) - V(\mathbf{W})} = 0.
    \]

    We now investigate each of the two terms in \cref{eq:multi:rt_lower:rcl}. We first observe that $\E{D_t V(\mathbf{W})} \leq 0$, because in the absence of arrivals, for any two queues~$i$ and~$j$, the absolute difference $|W_i - W_j|$ either decreases (if exactly one server is idle) or stays constant (otherwise). Therefore,
    \[
        \E{V(\mathbf{W} + S \mathbf{e}_\choice) - V(\mathbf{W})} \geq 0.
    \]
    Expanding the definition of $V(\mathbf{w})$ and writing $\sum_{i \neq \mathsf{choice}}$ for sums over all queues other than the one the job is dispatched to, we obtain
    \[
        0
        &\leq \frac{1}{2} \E*{\smashoperator[r]{\sum_{i \neq \mathsf{choice}}} \gp[\big]{(W_\choice + S - W_i)^2 - (W_\choice - W_i)^2}} \\
        &= \frac{n - 1}{2} \E{S^2} + \E*{\smashoperator[r]{\sum_{i \neq \mathsf{choice}}} S(W_\choice - W_i)} \\
        &= \frac{n - 1}{2} \E{S^2} + \E{S (n W_\choice - W_\all)}.
    \]
    Subtracting both sides from $m \E{n W_\choice - W_\all}$ and using the fact that
    \[
        -W_\all \leq n W_\choice - W_\all \leq (n - 1) W_\all,
    \]
    we obtain
    \[
        m \E{n W_\choice - W_\all}
        &\geq \E{(m - S) (n W_\choice - W_\all)} - \frac{n - 1}{2} \E{S^2} \\
        &\geq -\E[\big]{(S - m)^+ (n - 1) W_\all - (m - S)^+ W_\all} - \frac{n - 1}{2} \E{S^2} \\
        &\eqnote{(a)}{=} -\gp[\big]{\E[\big]{(n - 1) (S - m)^+ + (m - S)^+} \, \E{W_\all} + \frac{n - 1}{2} \E{S^2}} \\
        \label{eq:multi:rt_lower:benefit_bound}
        &= -\gp[\big]{(m - \E{S} + n\E{(S - m)^+}) \E{W_\all} + \frac{n - 1}{2} \E{S^2}},
    \]
    where (a) follows from the fact that an arriving job's size~$S$ is independent of the work vector~$\mathbf{W}$ it observes upon arrival.

    We now substitute the bound from \cref{eq:multi:rt_lower:benefit_bound} into \cref{eq:multi:rt_lower:delay}, obtaining
    \[
        \E{T_\pi}
        &= \frac{\E{S} - n \E{(S - m)^+}}{m} \E{W_\all} - \frac{(n - 1) \E{S^2}}{2 m} + n \E{S}.
    \]
    The bound follows from $\E{W_\all} \geq \E{W_\mgone}$ (see e.g. \cref{thm:work_decomp}) and \cref{eq:Kcard}, which implies
    \[
        \E{S} - n \E{(S - m)^+}
        = \E{S} - n \E{S \I(S > m)} + m n \P{S > m} 
        = m n \P{S > m} 
        = m K_\card.
        \qedhere
    \]
\end{proof}

\section{CARD Stability Analysis}
\label{sec:stability}
Proving CARD's stability is more than a straightforward application of the Foster-Lyapunov theorem, which is widely used to establish stability of queueing systems. The main obstacle here is that the long server alternates between being underloaded and overloaded. It is thus difficult to find a Lyapunov function that is negative outside a compact set.

To overcome this obstacle, we use a result of \citet[Theorem 1]{foss2012stability}. Notice that, under CARD, $W_s$ is itself a Markov process because the decision of where to dispatch a job depends only on the work at the shorter server. Roughly, \citep[Theorem 1]{foss2012stability} says that since $W_s$ is a Markov process of its own, if it is ergodic, then it suffices to do a drift analysis of $W_\ell$, \emph{averaged over} the stationary distribution of~$W_s$. Of course, we first need to show that $W_s$ is ergodic. Our proof for CARD's stability therefore proceeds in three steps.
\* We show that the short server's work~$W_s(t)$, as a Markov process of its own, is Harris ergodic (\cref{thm:stability_short}).
\* With the stability of $W_s(t)$ in hand, we bound the idleness probability of the short server in steady state (\cref{thm:ssc_short,thm:idleness_short}).
\* We apply the result of \citet[Theorem 1]{foss2012stability} (\cref{thm:stability}) to show stability whenever the long server is \emph{on average} not overloaded. Our bound on the short server's idleness probability from the previous step thus gives a sufficient condition for stability.
\*/
Armed with these key ideas, the proofs themselves are relatively straightforward, with the bulk of the work being computation. As such, we defer most of these computation details to \cref{sec:proofs:stability}.

\begin{restatable:lemma}
    \label{thm:stability_short}
    $W_s$ is Harris ergodic for any $\epsilon>0$.
\end{restatable:lemma}
\begin{proof}[Proof sketch]
The proof uses a Foster-Lyapunov theorem for continuous-time Markov processes \citep[Theorem 4.2]{meyn1993stability3}. The key step is to verify that the Lyapunov function $V(w_s) = w_s$ has bounded drift when $w_s \leq c$ and negative drift when $w_s > c$. This is true because when $w_s > c$, we only send small jobs to the short server. We defer the details to \cref{pf:stability_short}.
\noqed
\end{proof}

We establish our short server idleness bound by first proving a general bound on the probability that $W_s$ is lower than $c$ by a general amount~$x$. The idleness bound follows by plugging in $x = c$.

\begin{restatable:lemma}
    \label{thm:ssc_short}
    Suppose $\theta > 0$ satisfies $\widetilde{(S_{s,m})_e}(\theta)>\frac{1}{n(n-1)\beta+n-1}$, where $\widetilde{(S_{s,m})_e}(\cdot)$ is the Laplace transform of the equilibium distribution of the size of small and medium jobs, $S_{s,m} = (S \given S < m_+)$. Then for all $x \in [0, c]$,
    \[
        \P{W_s < c-x} \leq \frac{\gp*{n(n-1)\beta+n-1}\widetilde{(S_{s,m})_e}(\theta)}{\gp*{n(n-1)\beta+n-1}\widetilde{(S_{s,m})_e}(\theta)-1}e^{-\theta x}.
    \]
\end{restatable:lemma}
\begin{proof}[Proof sketch]
This result is a Chernoff-type bound on $(c - W_s)^+$, so the main task is to bound~$\E{\exp\gp{\theta (c - W_s)^+}}$. We do this by applying the rate conservation law \citep{miyazawa1994rate} to $\exp\gp{\theta (c - W_s)^+}$. We defer the details to \cref{pf:ssc_short}.
\noqed
\end{proof}

\begin{restatable:lemma}
    \label{thm:idleness_short}
    We have the following bound on the idleness of the short server,
    \[
        \P{W_s = 0} \leq \frac{n+1}{n\beta}\exp\gp*{-\frac{\beta c}{n(n-1)m_+}}.
    \]
\end{restatable:lemma}
\begin{proof}
Let $\theta=\frac{\beta}{n(n-1)}\frac{1}{m_+}$. Since $\beta < \tfrac{1}{n(n-1)}$ and all small and medium jobs have length at most~$m_+$, we have$\widetilde{(S_{s,m})_e}(\theta)\geq1-\theta\E{(S_{s,m})_e}\geq1-\frac{\beta}{n(n-1)}$. We can therefore apply \cref{thm:ssc_short}, from which the bound follows by the computation below and setting $x = c$:
\[
    \P{W_s<c-x}
    &\leq \frac{\gp*{n(n-1)\beta+n-1}\gp*{1-\frac{\beta}{n(n-1)}}}{\gp*{n(n-1)\beta+n-1}\gp*{1-\frac{\beta}{n(n-1)}}-1}\exp\gp*{-\frac{\beta x}{n(n-1)m_+}}\\
    &\leq \frac{n+1}{n\beta}\exp\gp*{-\frac{\beta x}{n(n-1)m_+}}.
    \qedhere
\]
\end{proof}

We defer the proof of \cref{thm:stability} to \cref{pf:stability}.

\section{CARD Mean Response Time Analysis}
\label{sec:upper}

With the lower bound from \cref{thm:rt_lower} in mind, our next step is to establish an upper bound on the mean response time under CARD. We focus here on the two-server case. The general case uses the same ideas but has more complicated computations, so we defer its proof to \cref{sec:proofs:multi}.

Let $\E{T_{\card, s}}$, $\E{T_{\card, m}}$, and $\E{T_{\card, \ell}}$ be the mean response times of small, medium, and large jobs under CARD, respectively. We have
\[
\E{T_{\card}}
&= p_s \E{T_{\card, s}} + p_m\E{T_{\card, m}} + p_\ell\E{T_{\card, \ell}} \\
\label{eq:rt_CARD}
&\leq 2 \E{S} + 2 p_s \E{W_s} + 2p_m c \P{W_s \leq c} + 2 p_m \E{W_\ell \I(W_s > c)} + 2 p_\ell \E{W_\ell}.
\]
where the inequality follows from how CARD dispatches jobs, the PASTA property \citep{wolff1982poisson}, and the fact that the servers complete work at rate $1/2$. The main difficulty of analyzing~\eqref{eq:rt_CARD} lies in bounding $\E{W_\ell}$ and $\E{W_\ell \I(W_s>c)}$. We now give a high-level overview of the obstacles and our approach.

\subsection{Key Ingredients: Work Decomposition, Below-Above Cycles, and Palm Inversion}
\label{sec:upper:ingredients}

To bound $\E{W_\ell}$, it suffices to bound $\E{W_\all}$. The following theorem, called the \emph{work decomposition law} \citep{scully2020gittins, scully2022new}, provides a way to bound~$\E{W_\all}$. We state it below in a way that is specialized to our system.

\begin{restatable:theorem}
    \label{thm:work_decomp}
    Denote by $I$ the fraction of servers that are idle in steady state, namely
    \[
    I = \frac{1}{n} \sum_{i=1}^n \I(W_i=0).
    \]
    The steady-state mean total work $\E{W_\all}$ satisfies
    \[
        \E{W_\all} = \E{W_\mgone} + \frac{\E{I W_\all}}{\epsilon} = \frac{\frac{\lambda}{2} \E{S^2} + \E{I W_\all}}{\epsilon},
    \]
    where $\E{W_\mgone}$ is the work in an M/G/1 with arrival rate $\lambda$ and job size distribution~$S$.
\end{restatable:theorem}

The key component we need to bound from \cref{thm:work_decomp} is $\E{I W_\all}$. We would like to study
\[
I W_\all&=(W_s+W_\ell)\gp*{\tfrac{1}{2}\I(W_s=0)+\tfrac{1}{2}\I(W_\ell=0)}\\
&=\tfrac{1}{2}W_\ell\I(W_s=0)+\tfrac{1}{2}W_s\I(W_\ell=0)
\]
The main difficulty here is to bound $\E{W_\ell\I(W_s=0)}$. Since CARD dispatches differently to the long server based on the state of the short server, $W_\ell$ depends on the state of $W_s$. Such a dependency also poses challenges in analyzing $\E{W_\ell\given W_s>c}$, when even knowing $\E{W_\ell}$ is not sufficient.

Under CARD, $W_s$ alternates between being above and below the threshold~$c$. Such a behavior naturally leads to renewal intervals consists of the ``above'' periods and ``below'' periods.

\begin{definition}\label{def:periods_below+above}
    We partition time into alternating intervals, called \emph{below periods} and \emph{above periods}, as follows:
    \* A time~$t$ is in a \emph{below period} if $W_s(t) \leq c$.
    \* A time~$t$ is in an \emph{above period} if $W_s(t) > c$.
    \*/
    A \emph{below-above cycle} is then a complete below period followed by a complete above period. Below-above cycles start at times~$t$ for which $W_s(t) = c$. We can partition time into below-above cycles.
    
    We introduce the following notation for working with below periods, above periods, and below-above cycles:
    \* We write $\E_c^0{\cdot}$ for the Palm expectation \citep{baccelli2002elements} taken at the start of a below-above cycle. Roughly speaking, $\E_c^0{\cdot} = \text{``}\E{\cdot \given \text{a below period starts at time~$0$}}\text{''}$, but the formal definition avoids conditioning on a measure-zero event.
    \* In the context of a below-above cycle starting at time~$0$, meaning $W_s(0) = c$, we denote the lengths of the below and above period by $B$ and~$A$, respectively:
    \[
        B &= \inf\curlgp{t > 0 : W_s(t) > c}, \\
        A &= \inf\curlgp{t > B : W_s(t) = c}.
    \]
    Abusing notation slightly, we also use $B$ and~$A$ to denote the lengths of the below and above period in a generic below-above cycle, not necessarily one that starts at time~$0$.
    \*/
\end{definition}

Why are above and below periods helpful for analyzing CARD? Within an above or below period, CARD does not change how it dispatches jobs, making it easier to analyze $W_\ell$ within \emph{one} below-above cycle. The Palm inversion formula \citep{baccelli2002elements}, which is a generalization of the celebrated renewal-reward theorem, allows us to connect the average behavior of $W_\ell$ within one below and above cycle to a steady-state average. For example, it implies
\[
    \E{W_\ell} = \frac{1}{\E{A+B}}\E_c^0*{\int_0^{A+B}W_\ell(t)\d{t}}, \qquad
    \E{W_\ell\I(W_s>c)} = \frac{1}{\E{A+B}}\E_c^0*{\int_B^{A+B}W_\ell(t)\d{t}}.
\]
Our high-level idea is to relate both of these quantities to $\E_c^0{W_\ell(0)}$, the mean work at the long server at the start of a below-above cycle. We show in \cref{thm:Wl_work_change, thm:work_long_below+above} that, roughly speaking,
\[
    \label{eq:work_comparison_intuition}
    \E{W_\ell} \approx \E_c^0{W_\ell(0)}, \qquad
    \E{W_\ell \I(W_s>c)} \approx \E_c^0{W_\ell(0)} \, \P{W_s > c}.
\]

The rest of this section is organized as follows.
\* \cref{sec:upper:short} analyzes the behavior of the short server. In particular, we show that above and below cycles are not too long.
\* \cref{sec:upper:long} analyzes the behavior of the long server. Using the fact that above and below cycles are not too long, we show \cref{eq:work_comparison_intuition}. As part of this, we bound~$\E{W_\all}$.
\* \cref{sec:upper:rt} assembles the pieces to prove \cref{thm:rt_upper_heavy}.
\*/

\subsection{Analyzing the Short Server and Below-Above Cycles}
\label{sec:upper:short}
In this section, we bound various quantities relating to work at the short server and the below-above cycles. Of particular importance are the mean \emph{excesses} of the above and below periods $\E{A_e}$ and $\E{B_e}$, as they are used to better understand the relations between $\E{W_\ell}$ and $\E_c^0{W_\ell(0)}$.

The techniques we use to obtain bounds on $\E{A_e}$ and $\E{B_e}$ also immediately yield bounds on $\E{A}$ and~$\E{B}$. Despite not using these bounds, given that they help complete the picture of how the system behaves, we state them, too.

As a reminder, the \emph{excess} or \emph{equilibrium distribution} of a random variable~$V$ is the distribution $V_e$ whose probability density function is $f(t) = \P{V > t} / \E{V}$. The excess arises naturally in renewal theory \citep{baccelli2002elements, harchol2013performance, ross1995stochastic}. Most important for our purposes is the fact that
\[
    \label{eq:mean_excess}
    \E{V_e} = \frac{\E{V^2}}{2 \E{V}}.
\]

\begin{restatable:lemma}
    \label{thm:mean_excess_bound_below}
    \[
        \E{B} \leq \frac{m_+}{\beta}, \qquad
        \E{B_e} \leq \frac{c+m_+}{\beta} \leq \frac{2 c}{\beta}, \qquad
        \text{and} \qquad
        \E{(B_e)_e} \leq \frac{c+m_+}{\beta} \leq \frac{2 c}{\beta}.
    \]
\end{restatable:lemma}

\begin{proof}
Suppose that at time~$0$, the short server has $W_s(0) = v \leq c$ work, so time~$0$ is in a below period. Let $\tau(v)$ be the time until the end of the below period. We will show
\[
\label{eq:hitting_time_below_process}
\E{\tau(v)} \leq \frac{c + m_+ - v}{\beta} \leq \frac{c + m_+}{\beta} \leq \frac{2 c}{\beta},
\]
where the last step follows because $c \geq m_+$ (\cref{sec:model:card}). This implies all three of the bounds.
\* A below period starts with $c$ work at the short server, so $\E{B} = \E{\tau(c)} \leq \frac{m_+}{\beta}$.
\* The excesses $B_e$ and~$(B_e)_e$ can both be interpreted as the distribution of the amount of time until the below period ends, starting from some random amount of work at the short server, so their means can each be written as $\E{\tau(V)}$ for an appropriate variable~$V$.
\*/

It remains only to show \cref{eq:hitting_time_below_process}, which we do using a supermartingale argument. Suppose $W_s(0) = v$ as above, and define $X(t) = c - W_s(t) + \beta t$. We now show that $X(t)$ is a supermartingale with respect to the Markov process~$W_s(t)$. Let
\* $\Delta_s(u, t)$ be the amount of work completed by the short server during $(u, t]$ and
\* $\Sigma_s(u, t)$ be the amount of work that arrives to the short server during $(u, t]$.
\*/
For any $0 \leq u \leq t$, we have
\[
\E{X(t) \mid W_s(u)} - X(u)
&= \E{W_s(u) - W_s(t) \mid W_s(u)} + \beta (t - u) \\
&= \E{\Delta_s(u, t) - \Sigma_s(u, t) \mid W_s(u)} + \beta (t - u) \\
&\leq \E*{\tfrac{1}{2}(t - u) - \Sigma_s(u, t) \mid W_s(u)} + \beta (t - u) \\
&= (t - u) \gp*{\tfrac{1}{2} - \rho_s - \rho_\ell + \beta} = 0,
\]
so $X(t)$ is indeed a supermartingale. Applying the optional stopping theorem to~$X(t)$ and $\tau(v)$, which we justify below, yields
\[
    c - v
    = \E{X(0)}
    \geq \E{X(\tau(v))}
    = c - W_s(\tau(v)) + \beta \tau(v)
    \eqnote{(a)}{\geq} -m_+ + \beta \tau(v),
\]
from which \cref{eq:hitting_time_below_process} follows. Above, (a) uses the fact that all medium jobs have size at most~$m_+$, so at the moment the below period ends, the short server's work can jump to at most $c + m_+$.

All that remains is to verify that we can indeed apply the optional stopping theorem.
\* We have $\E{\tau(v)} < \infty$ by positive recurrence of~$W_s(t)$.
\* We have uniform integrability, namely $\lim_{t\to\infty} \E{X(t) \, \I(\tau(v) > t)} = 0$, thanks to the following two observations. First, $\E{W_s(t) \, \I(\tau(v) > t)} \to 0$ because $c - W_s(t) \in [0, c]$ when $t$ is in a below period. Second, $\E{\beta t \, \I(\tau(v) > t)} \leq \E{\beta \tau(v) \, \I(\tau(v) > t)} \to 0$ because $\E{\tau(v)} < \infty$.
\qedhere
\*/
\end{proof}

\begin{restatable:lemma}
    \label{thm:work_short}
    \[
        \E{W_s - c \given W_s > c}
        \leq \frac{m_+}{4 \alpha}
        \qquad
        \text{and} \qquad
        \E{(W_s - c)^2 \given W_s > c}
        \leq \frac{m_+^2}{8 \alpha^2}
    \]
\end{restatable:lemma}

\begin{proof}[Proof sketch]
    Each above period starts with $W_s - c \in [0, m_+]$. Until the end of the above period, $W_s - c$ evolves like the amount of work in an M/G/1 queue with server speed~$1/2$, job size distribution~$S_s$, and work arrival rate $\rho_s < 1/2$. This means $(W_s - c \given W_s > c)$ has the same distribution as an M/G/1 with vacations, where the vacation length distribution is that of $W_s - c$ at the start of an above period. The desired bounds follow from the work decomposition formula for the M/G/1 with vacations \citep{fuhrmann1985stochastic} and the observation that both job sizes and vacation lengths are bounded by~$m_+$. We defer the details to \cref{pf:work_short}.
\end{proof}

\begin{restatable:lemma}
    \label{thm:mean_excess_bound_above}
    \[
        \E{A} \leq \frac{m_+}{\alpha} \qquad
        \text{and} \qquad
        \E{A_e} \leq \frac{m_+}{4 \alpha^2}.
    \]
\end{restatable:lemma}

\begin{proof}
    As in the proof sketch of \cref{thm:work_short}, we view the short server during an above period as an M/G/1 with server speed~$1/2$ and work arrival rate~$\rho_s$, so the mean drift of $W_s$ is $-(1/2 - \rho_s) = -\alpha$. By standard results for M/G/1 busy periods \citep{harchol2013performance}, starting from $W_s - c = v$, it takes $v/\alpha$ time in expectation for the above period to end.
    \* The $\E{A}$ bound follows from the fact that at the start of an above period, $W_s - c \leq m_+$, implying $\E{A} \leq m_+/\alpha$.
    \* The $\E{A_e}$ bound follows from the fact that the residual time of an above period is distributed as~$A_e$. But the residual time of an above period is the same as the amount of time until an above period ends starting from the stationary distribution of $W_s - c$ conditional on being in an above period. This means $\E{A_e} = \E{W_s - c \given W_s > c}/\alpha$, so the result follows from \cref{thm:work_short}.
    \qedhere
    \*/
\end{proof}

\subsection{Analyzing the Long Server}
\label{sec:upper:long}
In this section, we bound differences between $\E_c^0{W_\ell(0)}$ and $\E{W_\ell}$, $\E{W_\ell\I(W_s>c)}$, and $\E{W_\ell\I(W_s=c)}$, separately. These bounds will help us upper bound $\E{W_\all}$, thereby obtaining a bound on $\E{W_\ell}$.

Let $q_A$ and $q_B$ be the probabilities of being in an above or below period, respectively. That is,
\[
    \label{eq:prob_above_below_renewal}
    q_A = \P{W_s > c} = \frac{\E{A}}{\E{A + B}} \qquad
    \text{and} \qquad
    q_B = \P{W_s \leq c} = \frac{\E{B}}{\E{A + B}},
\]
where the expressions in terms of expectations of $A$ and $B$ follow from renewal-reward theorem.

\begin{restatable:lemma}
    \label{thm:Wl_work_change}
    \[
        \vgp[\big]{\E{W_\ell}-\E_c^0{W_\ell(0)}}
        \leq \gp*{\sqrt{q_A \E{A_e}} + \sqrt{q_B \E{B_e}}}^2
        \leq \frac{q_A m_+}{2 \alpha^2} + \frac{4 q_B c}{\beta}.
    \]
\end{restatable:lemma}
\begin{proof}
    The long server workload process can be described as
    \[
        \label{eq:Wl_process}
        W_\ell(t)=\gp*{W_\ell(0)-\Delta_\ell(0,t)+\Sigma_\ell^m(0,t)+\Sigma_\ell^\ell(0,t)}^+,
    \]
    where
    \* $\Delta_l(0,t)$ is the total work processed by the long server in $(0,t]$,
    \* $\Sigma_\ell^m(0,t)$ is the total work added to the long server from medium job arrivals in $(0,t]$, and
    \* $\Sigma_\ell^\ell(0,t)$ is the total work added to the long server due to large job arrivals in $(0,t]$.
    \*/
    Applying the Palm inversion formula \citep{baccelli2002elements} to $W_\ell$ gives
    \[
        \E{W_\ell}
        &= \frac{1}{\E{A+B}}\E_c^0*{\int_0^{A+B}\gp*{W_\ell(0)-\Delta_\ell(0,t)+\Sigma_\ell^m(0,t)+\Sigma_\ell^\ell(0,t)}^+\d{t}} \\
        &\eqnote{(a)}{=} \E_c^0{W_\ell(0)} + \frac{1}{\E{A+B}}\E_c^0*{\int_0^{A+B} \max\curlgp*{-\Delta_\ell(0,t)+\Sigma_\ell^m(0,t)+\Sigma_\ell^\ell(0,t), -W_\ell(0)}\d{t}},
    \]
    where (a) holds since $W_\ell(0)$, the amount of long server work at time~$0$, is independent of $A + B$, the length of the below-above cycle starting at time~$0$.
    
    We now bound $\E{W_\ell}-\E_c^0{W_\ell(0)}$ separately from above and below. To obtain a lower bound, we bound the integrand below by $-\Delta(t)$, obtaining
    \[
        \E{W_\ell} - \E_c^0{W_\ell(0)}
        \geq - \frac{1}{\E{A+B}} \E_c^0*{\int_0^{A + B}\Delta_\ell(0,t)\d{t}}
        \eqnote{(b)}{\geq} - \frac{1}{\E{A+B}} \E_c^0*{\int_0^{A + B} \frac{t}{2} \d{t}}
        = -\frac{\E{(A+B)^2}}{4 \E{A+B}},
    \]
    where (b) holds because the server completes work at rate~$\tfrac{1}{2}$ while it is busy. To obtain an upper bound, we bound the integrand above by $\Sigma^m_\ell(0, t) + \Sigma^\ell_\ell(0, t)$. We first bound its conditional expectation given $A$ and~$B$. Notice that $\Sigma^m_\ell(0, t) + \Sigma^\ell_\ell(0, t)$ consists of arrivals of large jobs during $(0, t]$ and medium jobs during $(B, t]$. Neither of these types of arrivals impacts the lengths of the above and below periods, so
    \[
        \label{eq:arrival_conditional_expectation}
        \E_c^0{\Sigma^m_\ell(0, t) + \Sigma^\ell_\ell(0, t) \given A, B} &= \rho_m (t - B)^+ + \rho_\ell t \leq t.
    \]
    From \cref{eq:arrival_conditional_expectation} and a computation similar to the lower bound, we obtain
    \[
        \E{W_\ell}-\E_c^0{W_\ell(0)} \leq \frac{\E{(A+B)^2}}{2 \E{A+B}}.
    \]
    Combining this with the lower bound, the result follows from \cref{eq:mean_excess}, \cref{eq:prob_above_below_renewal}, and  Cauchy-Schwarz:
    \[
        \frac{\E{(A+B)^2}}{2 \E{A+B}}
        \leq \frac{\E{A^2} + \sqrt{\E{A^2} \E{B^2}} + \E{B^2}}{2 \E{A+B}}
        = q_A \E{A_e} + 2 \sqrt{q_A \E{A_e} \, q_B \E{B_e}} + q_B \E{B_e}.
    \]
    To complete the proof, we use the AM-GM inequality (arithmetic mean $\geq$ geometric mean) on $\sqrt{q_A \E{A_e} \, q_B \E{B_e}}$, then apply our bounds on $\E{A_e}$ and $\E{B_e}$ from \cref{thm:mean_excess_bound_below, thm:mean_excess_bound_above}.
\end{proof}

\begin{restatable:lemma}
    \label{thm:work_long_below+above}
    \[
        \vgp[\big]{\E{W_\ell \I(W_s > c)} - q_A \E_c^0{W_\ell(0)}}
        \leq q_A \E{A_e} + 2 \sqrt{q_A \E{A_e} \, q_B \E{B_e}}
        \leq \frac{q_A m_+}{4 \alpha^2} + \frac{\sqrt{2 q_A q_B m_+ c}}{\alpha \sqrt{\beta}}.
    \]
\end{restatable:lemma}
\begin{proof}
    Similar to that of \cref{thm:Wl_work_change}. See \cref{pf:work_long_below+above}.
    \noqed
\end{proof}

\begin{restatable:lemma}
    \label{thm:IW_analysis}
    \[
    \E{W_\ell \I(W_s=0)}
    \leq \delta \E_c^0{W_\ell(0)} + \sqrt{\delta \E{B_e^2}}
    \leq \delta \E_c^0{W_\ell(0)} + \frac{2 c \sqrt{2 \delta}}{\beta}.
    \]
\end{restatable:lemma}
\begin{proof}
    Applying Palm inversion formula \citep{baccelli2002elements} to $W_\ell \I(W_s=0)$ yields
    \[
        \E{W_\ell\I(W_s=0)} = \frac{1}{\E{A+B}} \E_c^0*{\int_0^BW_\ell(t)\,\I(W_s(t)=0)\d{t}},
    \]
    where we can end the integral at $B$ because we only have $W_s(t) = 0$ during below periods, which corresponds to $t \in [0, B)$. We further expand the right-hand side using~\cref{eq:Wl_process}. No medium jobs are dispatched to the short server during below periods, so
    \[
    \MoveEqLeft
        \E{W_\ell\I(W_s=0)}
        \leq \frac{1}{\E{A+B}}\E_c^0*{\int_0^B(W_\ell(0)+\Sigma_\ell^\ell(0,t))\,\I(W_s(t)=0)\d{t}}\\
        &\eqnote{(a)}{=} \frac{\E_c^0{W_\ell(0)}}{\E{A+B}} \E_c^0*{\int_0^B\I(W_s(t)=0)\d{t}} + \frac{1}{\E{A + B}}\E_c^0*{\int_0^B\Sigma_\ell^\ell(0,t)\,\I(W_s(t)=0)\d{t}}.
    \]
    where (a) follows from the independence of $W_\ell(0)$ and $\int_0^B\I(W_s(t)=0)\d{t}$. To analyze the first term, we observe that by the Palm inversion formula \citep{baccelli2002elements} and \cref{thm:stability},
    \[\frac{1}{\E{A+B}}\E_c^0*{\int_0^B\I(W_s(t)=0)\d{t}} = \E{\I(W_s = 0)} = \P{W_s=0} \leq \delta.\]
    To analyze the second term, we apply \cref{eq:arrival_conditional_expectation}, yielding
    \[
    \E_c^0*{\int_0^B\Sigma_\ell^\ell(0,t)\,\I(W_s(t)=0)\d{t}}
    \leq \E_c^0*{\int_0^B t\I(W_s(t)=0)\d{t}}.
    \]
    The right-hand side is difficult to compute directly due to the dependency of $B$ and $W_s$. To resolve this, we apply the Palm inversion formula \citep{baccelli2002elements} to $B_a\I(W_s=0)$, where $B_a(t)$ is the age process of the below-above cycle, namely the amount of time since the current cycle began. This yields
    \[\frac{1}{\E{A+B}}\E_c^0*{\int_0^B t\I(W_s(t)=0)\d{t}} = \E{B_a\I(W_s=0)}\]
    Thus, to bound $\mathcal{T}_3$, it suffices to bound $\E{B_a\I(W_s=0)}$. By Cauchy-Schwarz,
    \[
    \E{B_a\I(W_s=0)}\leq\sqrt{\E{B_a^2}\P{W_s=0}}
    \eqnote{(b)}{=} \sqrt{\E{B_e^2}\P{W_s=0}}
    \eqnote{(c)}{\leq} \sqrt{\delta\E{B_e^2}},
    \]
    where (b) follows because $B_a$ has distribution~$B_e$, and (c) follows from \cref{thm:stability}. The result then follows from bounding $\E{B_e^2}$ using \cref{eq:mean_excess, thm:mean_excess_bound_below}.
\end{proof}

\begin{restatable:lemma}
    \label{thm:bound_on_W}
    \[
        \E{W_\ell}
        \leq \E{W_\all}
        \leq \gp[\bigg]{1 + \frac{\delta}{\epsilon}} \E{W_\mgone}
            + 2 c + \frac{m_+ \sqrt{q_A}}{2 \alpha \sqrt{\epsilon}}
            + \frac{4 c \sqrt{\delta}}{\alpha^2 \beta \epsilon}.
    \]
\end{restatable:lemma}
\begin{proof}
We use \cref{thm:work_decomp} to bound $\E{W_\all}$, which amounts to analyzing $\E{I W_\all}$. We have
\[
    I W_\all
    = (W_s+W_\ell) \gp*{\tfrac{1}{2}\I(W_s=0)+\tfrac{1}{2}\I(W_\ell=0)}
    = \tfrac{1}{2}W_\ell\I(W_s=0)+\tfrac{1}{2}W_s\I(W_\ell=0).
\]
Combining \cref{thm:Wl_work_change,thm:IW_analysis} and noting $\E{W_\ell} \leq \E{W_\all}$ yields a bound on $\E{W_\ell\I(W_s=0)}$:
\[
    \E{W_\ell\I(W_s=0)}
    \leq \delta \gp*{\E{W_\all} + \frac{q_A m_+}{4 \alpha^2} + \frac{4 q_B c}{\beta}} + \frac{2 c \sqrt{2 \delta}}{\beta}.
\]
To bound $\E{W_s\I(W_\ell=0)}$, we compute
\[
    \E{W_s\I(W_\ell=0)}
    &\eqnote{(a)}{\leq} \E{(c+(W_s-c)^+)\,\I(W_\ell=0)}\\
    &\eqnote{(b)}{\leq} c\P{W_\ell=0}+\sqrt{\E*{((W_s-c)^+)^2} \, \P{W_\ell=0}}\\
    &= c\P{W_\ell=0}+\sqrt{q_A\E*{(W_s-c)^2 \given W_s > c} \, \P{W_\ell=0}}\\
    &\eqnote{(c)}{\leq} 2\epsilon c+\frac{m_+ \sqrt{q_A \epsilon}}{2\alpha},
\]
where (a) follows from $W_s\leq c+(W_s-c)^+$, (b) follows from Cauchy-Schwarz, and (c) follows from \cref{thm:work_short} and the fact that $\epsilon = \tfrac{1}{2}\P{W_s=0} + \tfrac{1}{2}\P{W_\ell=0} \geq \tfrac{1}{2}\P{W_\ell=0}$. Combining the bounds on $\E{W_\ell\I(W_s=0)}$ and $\E{W_s\I(W_\ell=0)}$ with \cref{thm:work_decomp}, we obtain
\[
    \E{W_\all}
    &= \E{W_\mgone} + \frac{1}{\epsilon} \gp*{\tfrac{1}{2} \E{W_\ell\I(W_s=0)} + \tfrac{1}{2} \E{W_s\I(W_\ell=0)}} \\
    &\leq \E{W_\mgone}
        + c
        + \frac{m_+ \sqrt{q_A}}{4 \alpha \sqrt{\epsilon}}
        + \frac{\delta}{2 \epsilon} \E{W_\all} + \frac{q_A m_+ \delta}{8 \alpha^2 \epsilon} + \frac{2 q_B c \delta}{\beta \epsilon}
        + \frac{c \sqrt{2 \delta}}{\beta \epsilon}.
\]
The result follows after rearranging and simplifying. We use the fact that we have defined the parameters such that $c \geq m_+$ and $\delta \leq \epsilon$ (\cref{sec:model:card, sec:model:quantities}), which means $1/\gp[\big]{1 - \frac{\delta}{2 \epsilon}} \leq 1 + \frac{\delta}{\epsilon} \leq 2$. And, using the fact that $\alpha, \beta \leq \tfrac{1}{2}$, we loosely bound the terms with a $\sqrt{\delta}$ factor by
\[
    \frac{q_A m_+ \delta}{8 \alpha^2 \epsilon} + \frac{2 q_B c \delta}{\beta \epsilon} + \frac{c \sqrt{2 \delta}}{\beta \epsilon}
    \leq (q_A + q_B) \frac{c \delta}{2 \alpha^2 \beta \epsilon} + \frac{c \sqrt{2 \delta}}{\beta \epsilon}
    \leq \frac{2 c \sqrt{\delta}}{\alpha^2 \beta \epsilon}.
    \qedhere
\]
\end{proof}

\subsection{Bounding Mean Response Time}
\label{sec:upper:rt}

We now prove \cref{thm:rt_upper_heavy}, our main upper bound result. It follows as a corollary of a more explicit bound, which we state in \cref{thm:rt_upper_explicit}. To simplify the computations, we assume that $\beta \geq 2\delta$, but we could remove this assumption at the cost of slightly complicating the expressions.

\begin{restatable:lemma}
    \label{thm:prob_above_below}
    If $\beta \geq 2 \delta$, then $q_A \leq \frac{2 \beta}{\alpha + \beta}$ and $q_B \leq \frac{\alpha}{\alpha + \beta}$.
\end{restatable:lemma}
\begin{proof}
    The short server is stable, so the load of jobs arriving to it equals the average rate it completes work. This means $\rho_s + \rho_m \P{W_s \leq c} = \tfrac{1}{2} \P{W_s > 0}$. \cref{thm:stability} implies $\P{W_s > 0} \in [1 - \delta, 1]$, so the bound follows from the definitions of $\alpha$ and $\beta$ and the $\beta \geq 2 \delta$ assumption.
\end{proof}

\begin{restatable:theorem}
    \label{thm:rt_upper_explicit}
    In a system with $n = 2$ servers, if $\delta \leq \epsilon < \frac{1}{2}$ and $\beta \geq 2 \delta$, then by setting $c$ according to \cref{thm:stability}, CARD achieves mean response time bounded by
    \[
        \E{T_\card}
        &\leq \gp*{K_\card + \frac{4 \beta}{\alpha + \beta}} \gp*{1 + \frac{\delta}{\epsilon}} \E{W_\mgone} + 2 \E{S}
            \\* &\quad
            + 44 m_+ \max\curlgp[\Bigg]{
                \frac{\beta}{\alpha^2 (\alpha + \beta)}, 
                \sqrt{\frac{\beta}{\alpha^2 \epsilon (\alpha + \beta)}},
                \frac{\log\frac{3}{2 \beta \delta}}{\beta (\alpha + \beta)},
                \sqrt{\frac{\log\frac{3}{2 \beta \delta}}{\beta}},
                \frac{\sqrt{\delta} \log\frac{3}{2 \beta \delta}}{\alpha^2 \beta^2 \epsilon}
            }.
    \]
\end{restatable:theorem}
\begin{proof}[Proof sketch]
    It suffices to bound the work quantities on the right-hand side of \cref{eq:rt_CARD}.
    \* \Cref{thm:work_short} implies $\E{W_s} \leq c + q_A \E{W_s - c \given W_s > c} \leq c + \frac{q_A m_+}{\alpha}$.
    \* \Cref{thm:Wl_work_change, thm:work_long_below+above} imply, after some simplification,
    \[
        \E{W_\ell \I(W_s > c)}
        \leq q_A \E{W_\ell} + \frac{q_A m_+}{\alpha^2} + \frac{4 q_A q_B c}{\beta} + \frac{\sqrt{2 q_A q_B m_+ c}}{\alpha \sqrt{\beta}}.
    \]
    \*/
    We use these with \cref{thm:bound_on_W, thm:prob_above_below} to express the right-hand side in terms of $\alpha$, $\beta$, $\delta$, and $m_+$, then simplify. We defer the details to \cref{pf:rt_upper_explicit}.
\end{proof}

\begin{proof}[Proof of \cref{thm:rt_upper_heavy} for $n = 2$ servers]
    The bound follows directly from plugging the parameter choices into \cref{thm:rt_upper_explicit}, and comparing with the lower bound in \cref{thm:rt_lower} implies heavy-traffic optimality. But the main question is why these are the right ways to set the parameters.
    
    If we set $\delta = \Theta(\epsilon^d)$ for fixed~$d$, the only expression in \cref{thm:rt_upper_explicit} that is increasing as a function of $d$ is $\log\frac{3}{2 \beta \delta} = d \Theta\gp*{\log \frac{1}{\epsilon}}$. We thus ignore factors of $\sqrt{\delta}$ when determining $\alpha$ and~$\beta$. One can check at the end that $d \geq 3$ suffices.

    Observe that we want $\beta/\alpha \downarrow 0$ to ensure the multiplier of $\E{W_\mgone}$ approaches $K_\card$. If we substitute $\beta = \kappa \alpha$ into \cref{thm:rt_upper_explicit}, then for any fixed~$\kappa$, the resulting expression is a decreasing function of $\alpha$, so we set $\alpha = \Theta(1)$. With this choice, the largest terms from the maximum in \cref{thm:rt_upper_explicit} are $\Theta\gp[\big]{\sqrt{\beta / \epsilon}}$ and $\Theta\gp[\big]{\frac{1}{\beta}\log\frac{1}{\epsilon}}$, which are balanced by $\beta = \Theta\gp[\big]{\epsilon^{1/3} \gp[\big]{\log \frac{1}{\epsilon}}^{2/3}}$.
\end{proof}

\section{Simulations}
\label{sec:simulation}
We have established the optimality of CARD as load approaches capacity. In this section, we investigate the performance of CARD in moderate traffic via simulations. We aim to provide insights into the following questions with our simulations.
\* How good is CARD's performance compared with other dispatching policies in the literature?
\* Are there simple modifications of CARD that exhibit better performance in practice? 
\* CARD has three tunable parameters: $c$, $\alpha$, and $\beta$. The recipe provided in \cref{thm:rt_upper_heavy} is optimal in heavy traffic, but are there rules of thumb that work well beyond heavy traffic? How sensitive is CARD's performance to these parameters? 
\*/
In all of our simulations, we consider three benchmark policies: LWL, SITA-E\footnote{%
    SITA-E is the version of SITA that splits the load equaly among all the servers. One can improve its performance very slightly by using an unbalanced load split, a policy known as SITA-O. But computing the optimal split is generally challenging \citep{harchol2010balance}, and in our initial experiments, SITA-O did not significantly improve upon SITA-E.}
\citep{harchol1999choosing}, and Dice \citep{hyytia2022sequential}. Roughly, Dice lets the server with least work pick small jobs from the arrival stream, leaving the large jobs for servers with more work. We refer interested readers to \Cref{sec:dice} for details.\footnote{%
    The version of Dice we simulate differs slightly from the original version in \citep{hyytia2022sequential}, where Dice have thresholds that do not vary with load. We notice that constant thresholds lead to suboptimal performance for either low or high loads. Therefore, we incorporate load-dependent thresholds for Dice that lead to good performance across all loads simulated. With two servers, we use a threshold of the form $\eta\epsilon^{-1/3}$ for Dice, picking $\eta = 1.8, 5.2, 20$ for $\mathsf{cv} = 1, 10, 100$, respectively. With ten servers, we use thresholds $2m_i\epsilon^{-1/3}$.}
Of course, there are many more dispatching policies. We pick LWL and SITA-E because they are extensively studied, and we pick Dice because among all heuristics for size- and state-aware dispatching, it has the best empirical performance at high load \citep{hyytia2022sequential}.

Our simulations include job size distributions with exponential and heavier tails. Heavy-tail distributions are common in computer systems and networks (e.g. \citep{mahanti2013tale}) and the high mean response times they incur make a good dispatching policy essential. Throughout this section, we consider three Weibull distributions with mean 1 and coefficients of variation ($\mathsf{cv}$) 1, 10, and~100. We simulate 40 trials for each data point, with $10^7$ arrivals per trial for $\mathsf{cv} = 1$ and $\mathsf{cv} = 10$, and $3\times10^7$ arrivals per trial for $\mathsf{cv} = 100$. We show 95\% confidence intervals when wider than the marker size.

\subsection{Performance of CARD with Two Servers}
Although CARD as introduced in \cref{sec:model:card} is heavy-traffic optimal, we can improve its performance under moderate traffic with one small modification: instead of statically deciding which server is short and which is long, dynamically treat whichever server has less work as the short server. We call this variant \emph{Flexible CARD}, and call the original version \emph{Rigid CARD} to disambiguate.

\begin{figure}
\begin{subfigure}[b]{0.31\textwidth}
    \includegraphics[width=\linewidth]{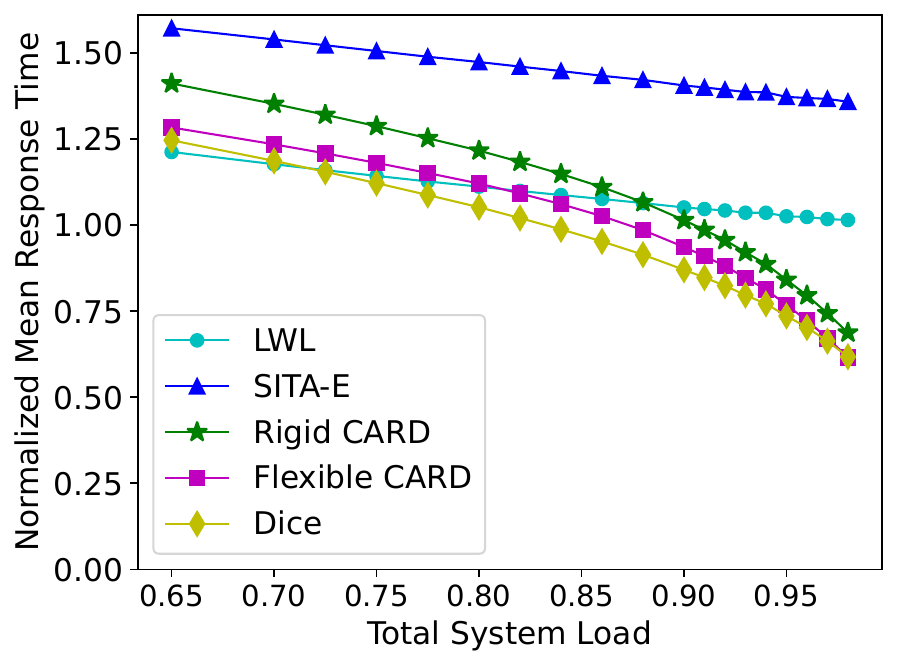}
    \caption{$2$ servers, $\mathsf{cv} = 1$}
\end{subfigure}\hfill
\begin{subfigure}[b]{0.31\textwidth}
    \includegraphics[width=\linewidth]{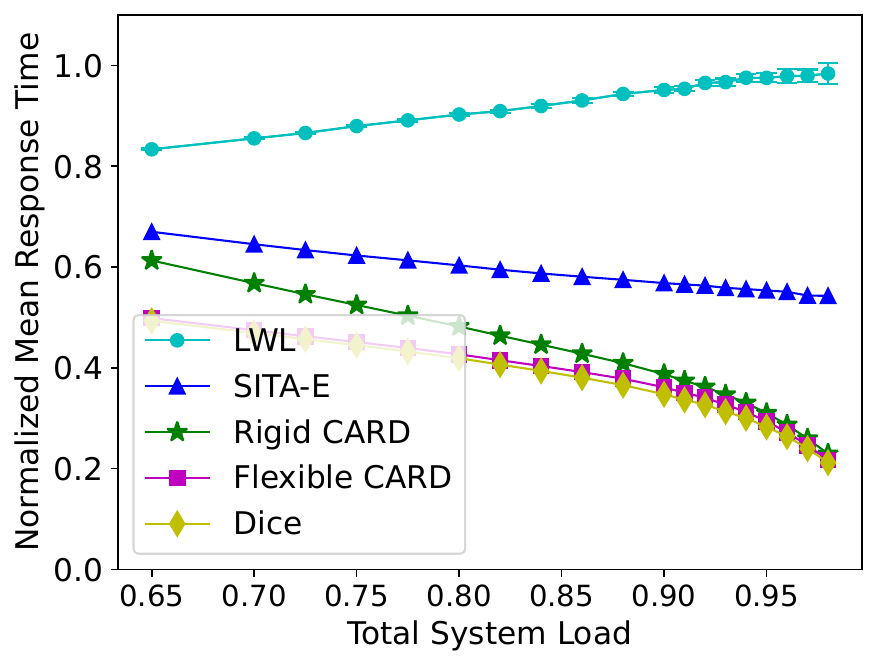}
    \caption{$2$ servers, $\mathsf{cv} = 10$}
\end{subfigure}\hfill
\begin{subfigure}[b]{0.31\textwidth}
    \includegraphics[width=\linewidth]{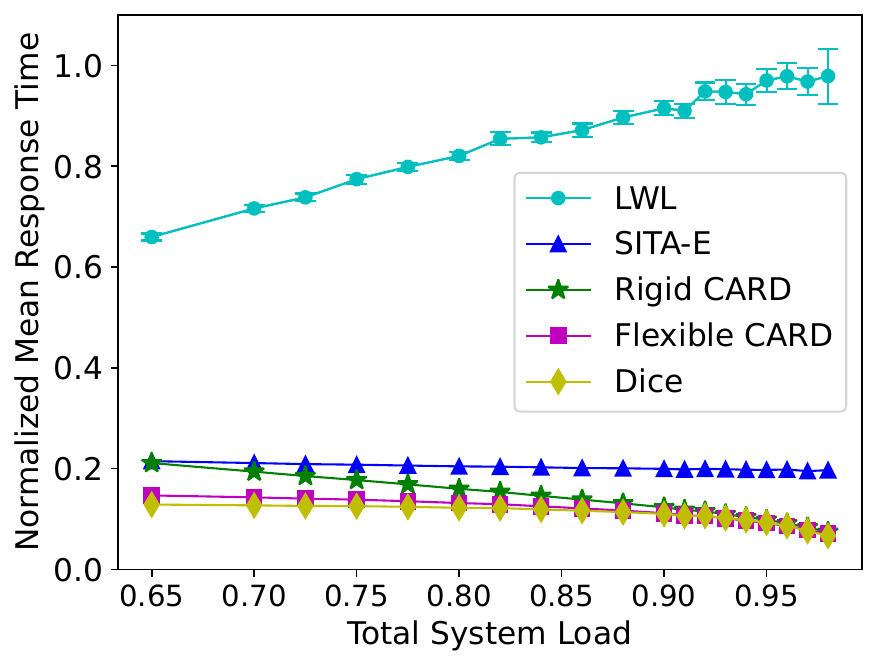}
    \caption{$2$ servers, $\mathsf{cv} = 100$}
\end{subfigure}
\caption{Normalized (relative to $\E{W_\mgone}$) mean response times for $n = 2$ servers.}
\label{fig:two-server-policies-comparison}
\end{figure}

\cref{fig:two-server-policies-comparison} shows us that both CARD versions significantly outperform LWL and SITA-E, especially at high loads and with large coefficients of variation.
For instance, with $\mathsf{cv} = 100$ and $\rho = 0.98$, CARD gives a \emph{93\% reduction} compared to LWL, and a \emph{61\% reduction} compared to SITA-E.
Flexible CARD is also almost tied with Dice at all loads simulated.

\subsection{Calibrating the Parameters of Two-Server CARD}
\label{sec:simulation:CARD_parameter}
We now discuss how to calibrate parameters $c$, $\alpha$, and $\beta$.
In practice, $\alpha$ and $\beta$ as prescribed in \cref{thm:rt_upper_heavy} are difficult to calibrate, because the ranges of $\alpha$ and $\beta$ change as $\rho$ increases.
Therefore, we consider instead the parameters $\alpha'=\frac{1}{2}-\frac{\rho_s}{\rho}$ and $\beta'=\frac{1}{2}-\frac{\rho_\ell}{\rho}$.
Adjusting $\alpha'$ can therefore be understood as adjusting the fraction of small jobs and adjusting $\beta'$ can be understood as adjusting the fraction of large jobs.

After trying a few strategies for scaling $c$ as a function of $\rho$, we found that thresholds of the form
$c=\gamma\frac{1}{\sqrt{\epsilon}}\log\gp*{\frac{1}{\epsilon}}$,
where $\gamma$ depends on the distribution, yield decent performance.

In general, for the three job size distributions we consider, mean response time under flexible CARD is not very sensitive to these parameters near the optima (see \cref{fig:flexible_CARD_parameter_sensitivity}). Any choice of parameters not too far from the optima yields decent performance. We found that $\alpha'=\beta'=0.15$ for all three distributions and $\gamma=0.3, 0.6, 2.5$ for cv = 1, 10, and 100, respectively, lead to decent performance. These are also the parameters we used in \cref{fig:two-server-policies-comparison}.

\begin{figure}
\begin{subfigure}[b]{0.31\textwidth}
    \includegraphics[width=\linewidth]{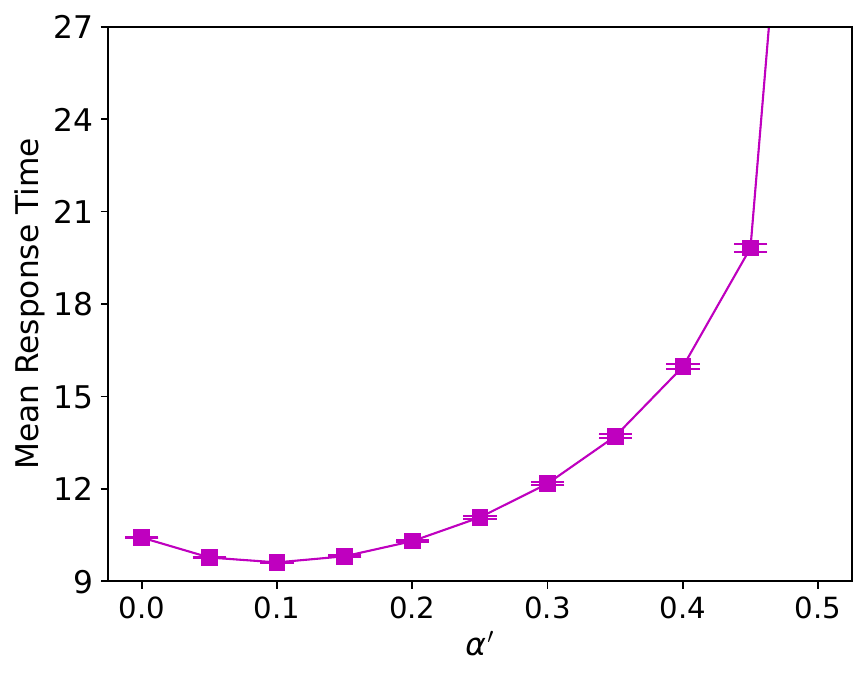}
    \caption{Varying $\alpha'$ ($\beta'=0.15$, $\gamma=0.6$)}
\end{subfigure}\hfill
\begin{subfigure}[b]{0.31\textwidth}
    \includegraphics[width=\linewidth]{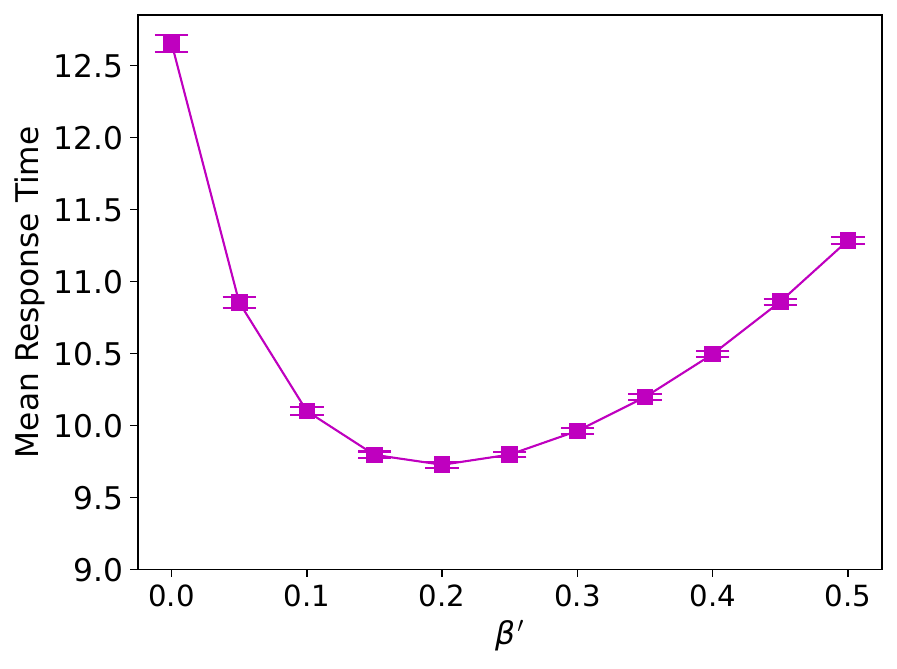}
    \caption{Varying $\beta'$ ($\alpha'=0.15$, $\gamma=0.6$)}
\end{subfigure}\hfill
\begin{subfigure}[b]{0.31\textwidth}
    \includegraphics[width=\linewidth]{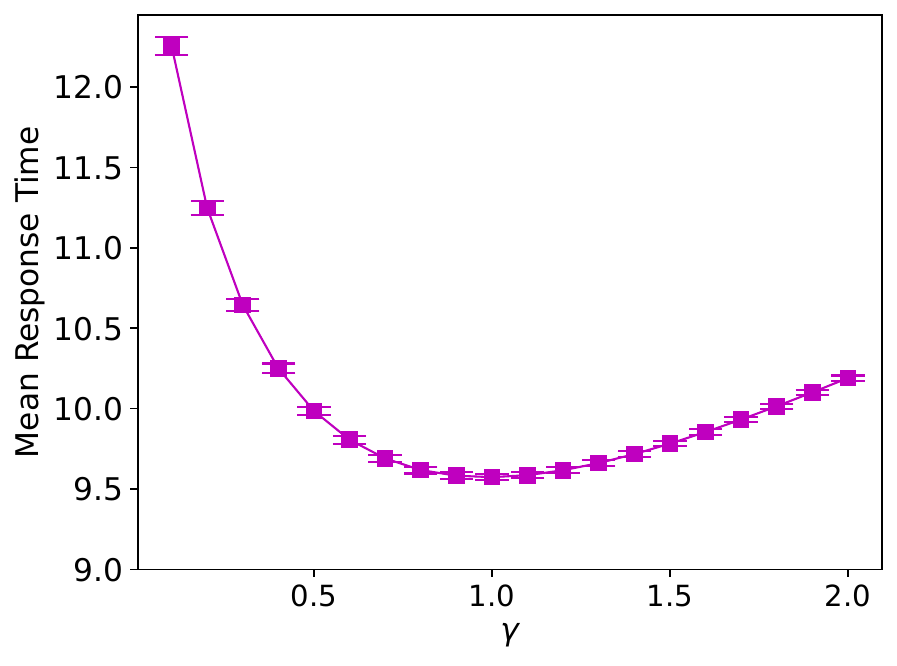}
    \caption{Varying $\gamma$ ($\alpha'=0.15$, $\beta'=0.15$)}
\end{subfigure}
\caption{For each of the three plots, we fix two parameters and vary one parameter across a range of values. Size distribution simulated has $\mathsf{cv} = 10$, and load is fixed at $\rho = 0.8$.}
\label{fig:flexible_CARD_parameter_sensitivity}
\end{figure}

\subsection{Improving CARD's Performance for More than Two Servers}
As the number of servers increases, flexible CARD with three parameters ($\gamma$, $\alpha'$, and $\beta'$) no longer performs well for distributions with large coefficients of variation \cref{fig:n-server-flexible}. Therefore, we propose another variant of CARD for $n$ servers called multi-band CARD. We first present the general dispatching rules, then explain how multi-band rigid and flexible CARDs are defined.
\* We divide the job size into $n+1$ small intervals such that each interval amounts to $\frac{1}{n}$ of the total load except for the first and last interval, each of which amounts to $\frac{1}{2n}$ of the total load. Denote the endpoints of these intervals as $0,m_1,\ldots m_n,\infty$.
\* Server $i$ except the last one has a threshold $c_i$, which can be different for different servers. When a job of size $s$ arrives, it is dispatched according to the following general rules:
    \** If $s<m_1$, it is dispatched to server 1.
    \** If $s>m_n$, it is dispatched to server $n$.
    \** If $s\in[m_i,m_{i+1})$ for $i=1,\ldots, n-1$, it is dispatched to server $i$ if $W_i\leq c_i$. Otherwise, it is dispatched to server $i+1$.
    \*/
\*/
Multi-band rigid CARD numbers the servers $1$ to $n$ and dispatches according to the rules outlined above. Server numbers do not change under rigid CARD. On the other hand, Multi-band flexible CARD sorts the servers in increasing work order when a job arrives so that $W_1\leq W_2\leq\cdots\leq W_n$, then dispatch according to the general rules.

Since all the $m_i$'s are fixed for each distribution, the tunable parameters are the $c_i$'s. Our experiments show that we achieve good performance by setting
$c_i = m_i/\sqrt{\epsilon}$.

As we can see in \cref{fig:n-server-policies-comparison},
multi-band CARDs significantly outperforms LWL and SITA-E at high loads and for job size with large coefficients of variation. When the job size distribution has cv=10, at $\rho=0.98$, mean response time under flexible CARD is $\sim$22\% and $\sim$19\% of the mean response times under LWL and SITA-E, respectively. When the job size distribution has cv=100, at $\rho=0.98$, mean response time under flexible CARD is $\sim$4\% and $\sim$21\% of the mean response times under LWL and SITA-E, respectively. Moreover, multi-band flexible CARD almost ties with Dice in all loads simulated.

\begin{figure}
\begin{subfigure}[b]{0.31\textwidth}
    \includegraphics[width=\linewidth]{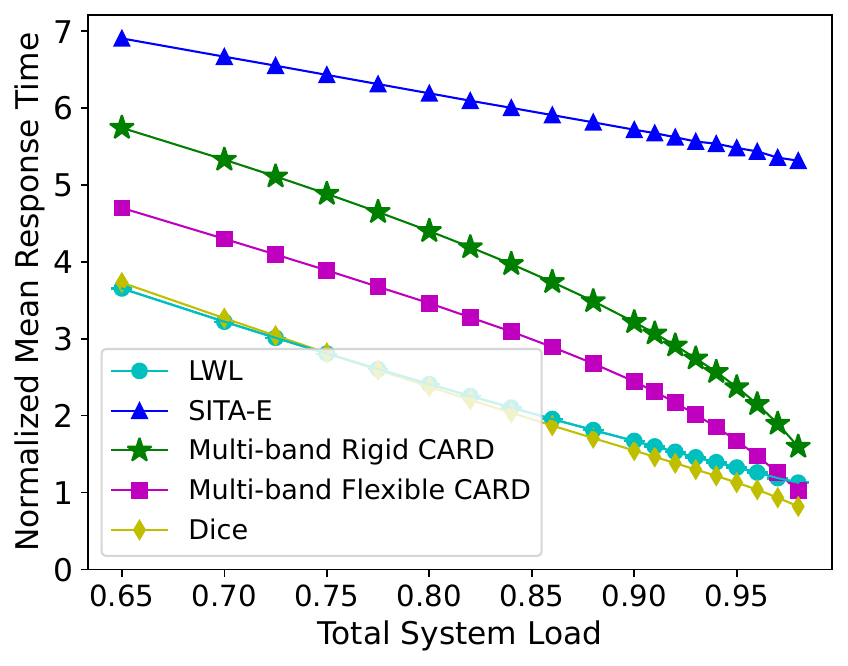}
    \caption{$10$ servers, $\mathsf{cv} = 1$}
\end{subfigure}\hfill
\begin{subfigure}[b]{0.31\textwidth} 
    \includegraphics[width=\linewidth]{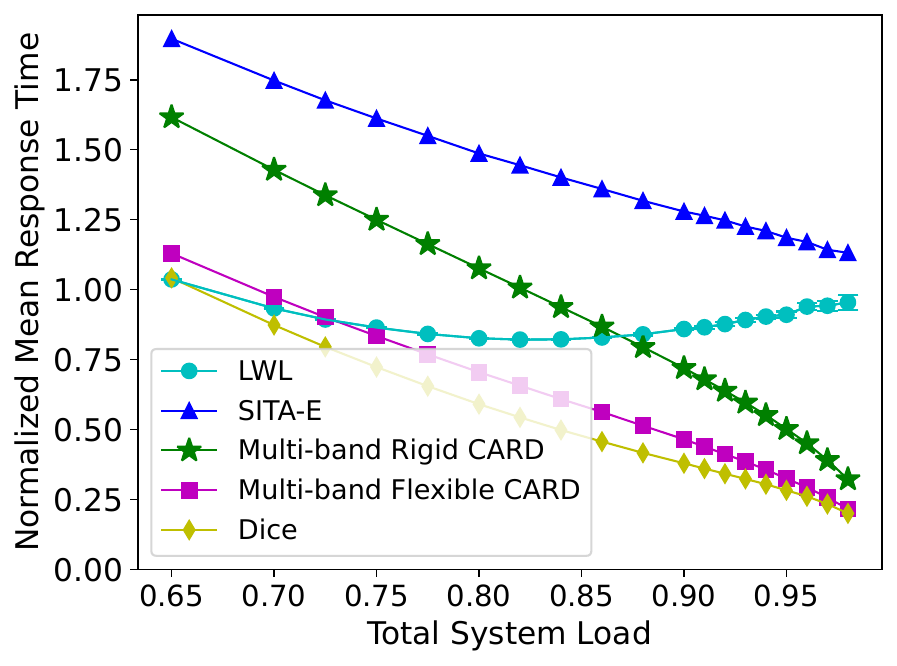}
    \caption{$10$ servers, $\mathsf{cv} = 10$}
\end{subfigure}\hfill
\begin{subfigure}[b]{0.31\textwidth}
    \includegraphics[width=\linewidth]{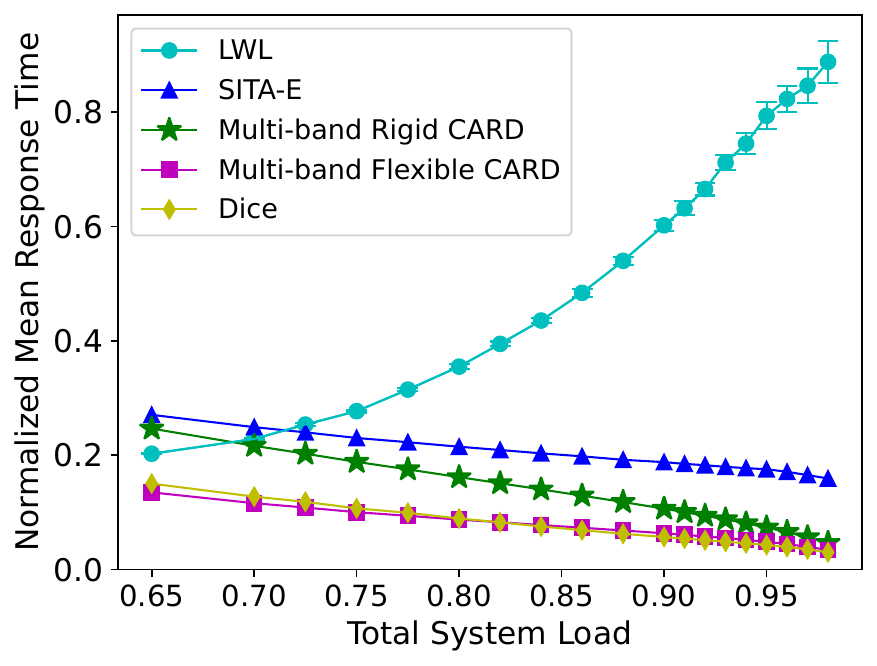}
    \caption{$10$ servers, $\mathsf{cv} = 100$}
\end{subfigure}
\caption{Normalized (relative to $\E{W_\mgone}$) mean response times for $n = 10$ servers.}
\label{fig:n-server-policies-comparison}
\end{figure}

\subsection{Tail Simulations}

Although our paper focuses exclusively on mean response time analysis, metrics based on the \emph{tail} of response time are often of interest in practice. As such, in this section, we conduct some simulations comparing the response time tails of CARD against the benchmark policies and provide some insights into the results.

We focus on a two-server system. The parameters of rigid CARD, flexible CARD, and Dice are the same as those in \Cref{sec:simulation:CARD_parameter}. The results are presented in \Cref{fig:two-server-tail-comparison}. We can see that for light-tail job size distribution, the tails of LWL and M/G/1/FCFS are better than those of rigid and flexible CARDs and Dice. On the other hand, for heavy-tail job size distribution, the tails of rigid and flexible CARDs and Dice are far better than those of LWL and M/G/1/FCFS up to 99-percentile of flexible CARD response time.

This result is not surprising. CARD and Dice both starve large jobs by making them wait in a long queue. For light-tail job size distributions, although giving a little priority to small jobs improves tail performance \citep{grosof2021nudge}, starving large jobs in general only worsens tail performance \citep{wierman2012tail}. For heavy-tail job size distributions, however, starving large jobs improves tail performance tremendously \citep{wierman2012tail}. As is shown in \Cref{fig:two-server-tail-comparison}, CARD and Dice significantly outperform LWL and M/G/1/FCFS up to 99-percentile of flexible CARD response time for heavy-tail job size distributions.

\begin{figure}
\begin{subfigure}[b]{0.31\textwidth}
    \includegraphics[width=\linewidth]{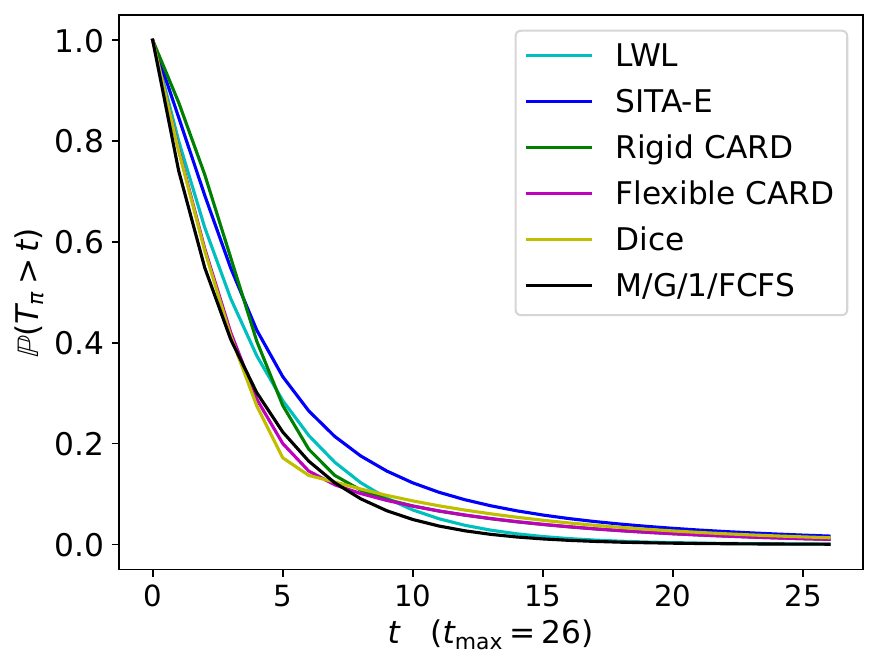}
    \caption{$2$ servers, $\mathsf{cv} = 1$, $\rho=0.7$}
\end{subfigure}\hfill
\begin{subfigure}[b]{0.31\textwidth}
    \includegraphics[width=\linewidth]{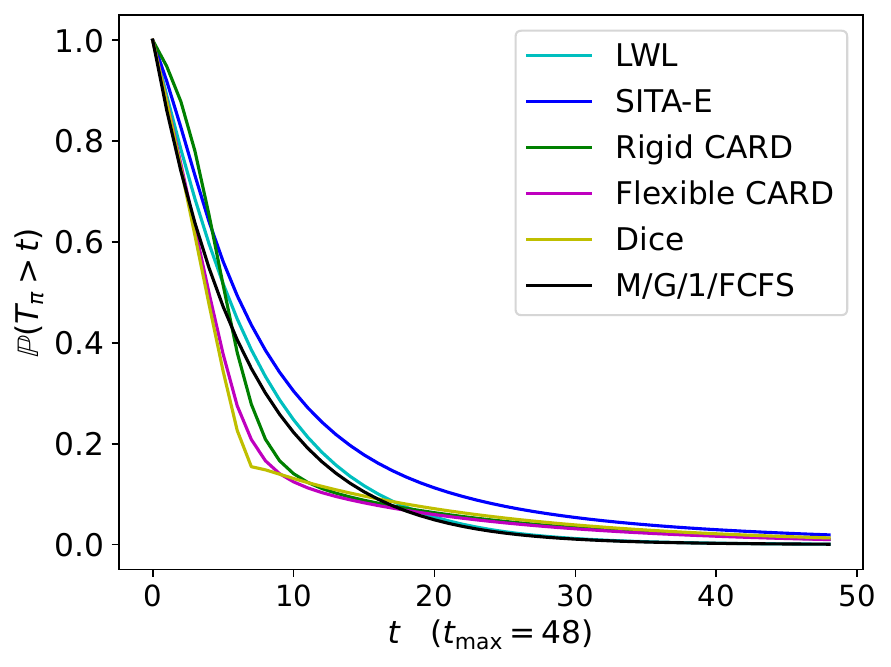}
    \caption{$2$ servers, $\mathsf{cv} = 1$, $\rho=0.85$}
\end{subfigure}\hfill
\begin{subfigure}[b]{0.31\textwidth}
    \includegraphics[width=\linewidth]{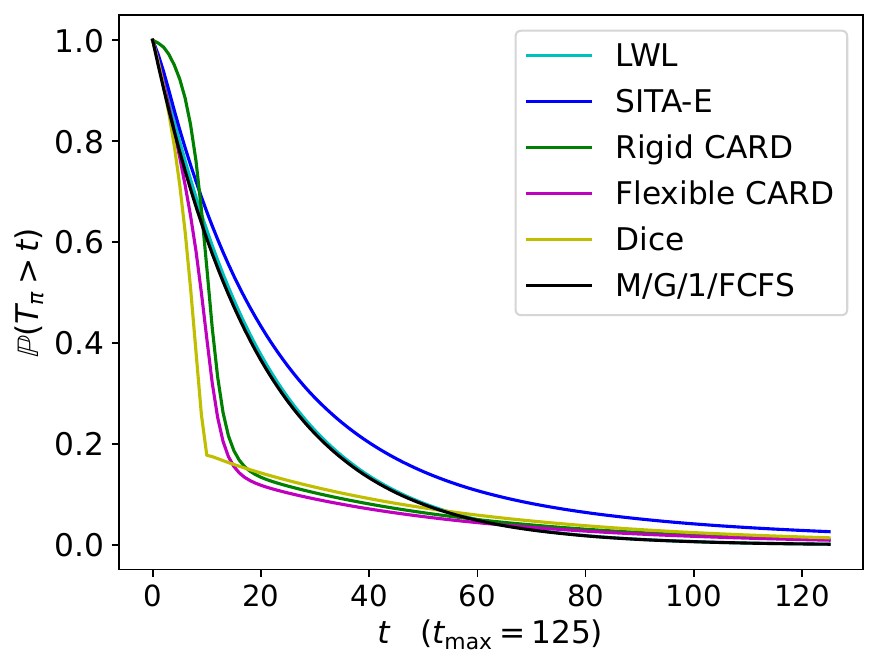}
    \caption{$2$ servers, $\mathsf{cv} = 1$, $\rho=0.95$}
\end{subfigure}
\begin{subfigure}[b]{0.31\textwidth}
    \includegraphics[width=\linewidth]{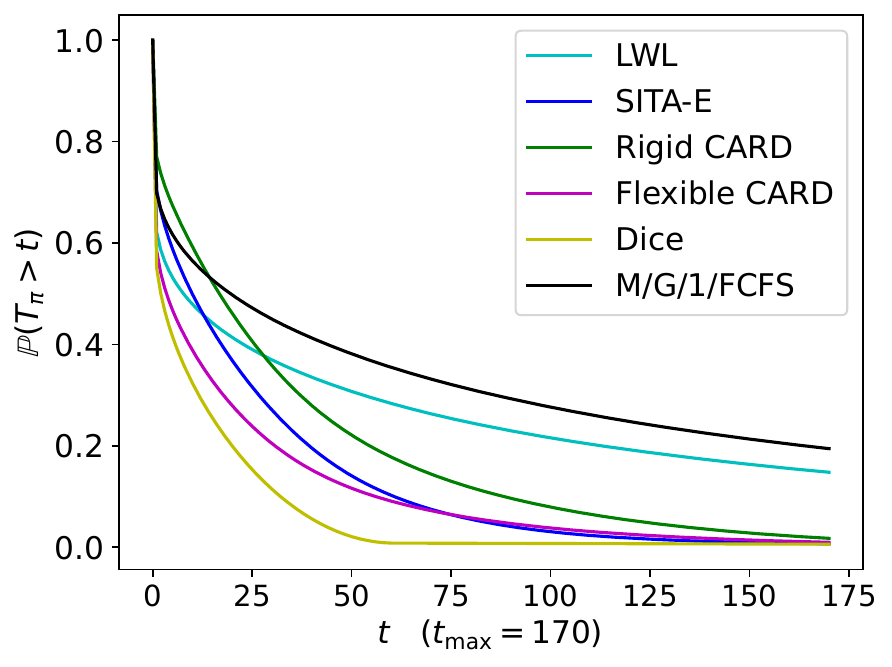}
    \caption{$2$ servers, $\mathsf{cv} = 100$, $\rho=0.7$}
\end{subfigure}\hfill
\begin{subfigure}[b]{0.31\textwidth}
    \includegraphics[width=\linewidth]{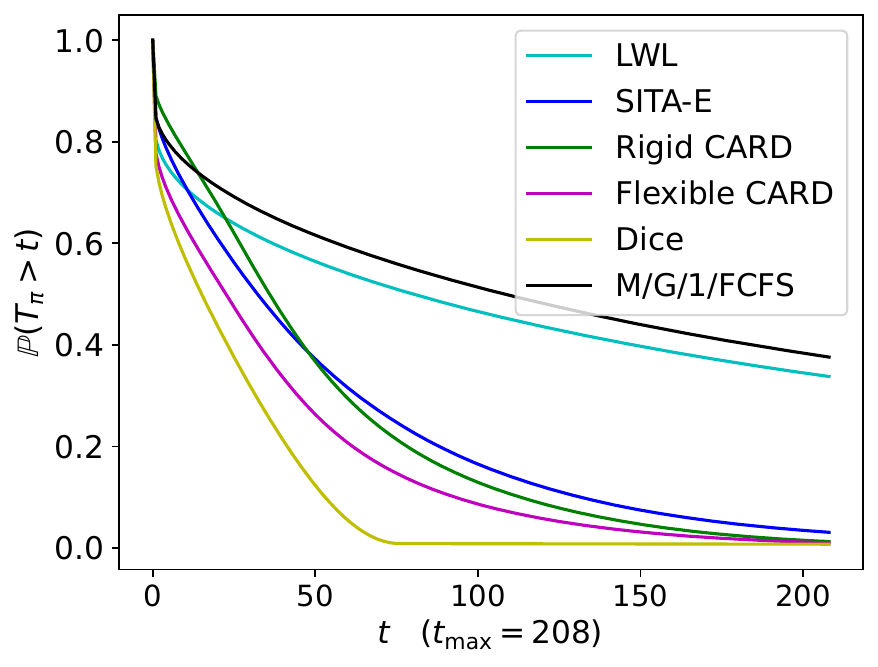}
    \caption{$2$ servers, $\mathsf{cv} = 100$, $\rho=0.85$}
\end{subfigure}\hfill
\begin{subfigure}[b]{0.31\textwidth}
    \includegraphics[width=\linewidth]{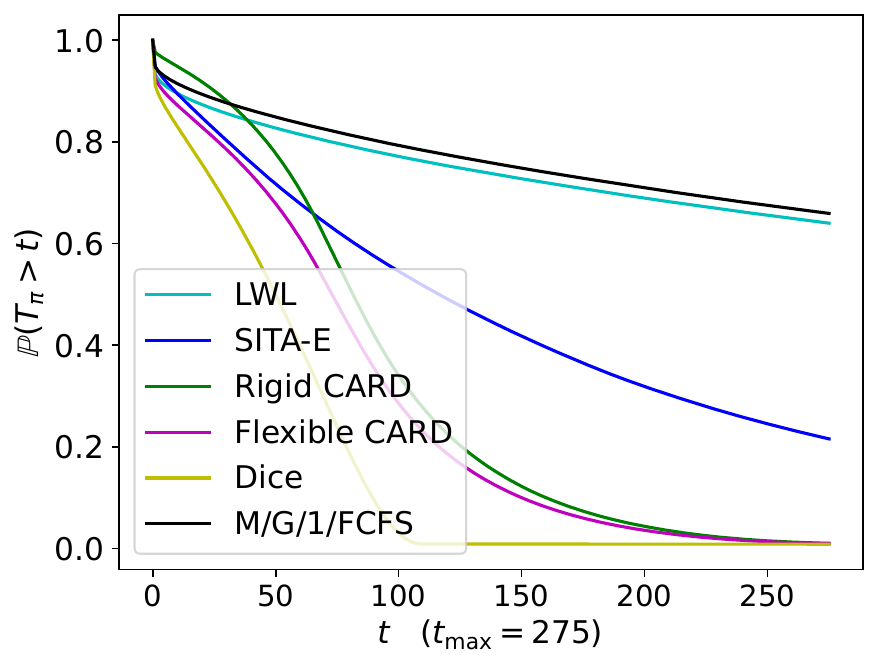}
    \caption{$2$ servers, $\mathsf{cv} = 100$, $\rho=0.95$}
\end{subfigure}
\caption{Response time tails for $n=2$ servers shown above. The tails are plotted in $[0,t_{\max}]$, where $t_{\max}$ is chosen so that $\P{T_{\text{Flexible CARD}}\leq t_{\max}}\approx0.99$.}
\label{fig:two-server-tail-comparison}
\end{figure}

\subsection{Comparing CARD to Dice}

Given the excellent performance of Dice, we feel Dice warrants a more in-depth discussion. We refer interested readers to \Cref{sec:dice}, where we discuss the following questions:
\* How complicated is Dice compared with CARD?
\* Why does Dice perform so well in simulations?
\* Is Dice heavy-traffic optimal?
\* Why is it hard to analyze Dice?
\*/
The main takeaway is that Dice may \emph{not} be heavy-traffic optimal, but it may be possible to modify Dice to make it heavy-traffic optimal. Dice remains a compelling option in practice that certainly deserves further study.

\section{Conclusion}

In this paper, we prove the first mean response time lower bound for FCFS servers. We design a new dispatching policy, called \emph{CARD (Controlled Asymmetry Reduces Delay)}, and show that it is heavy-traffic optimal, thus making CARD the first proven heavy-traffic optimal size- and state-aware dispatching policy. CARD can thus serve as a new benchmark policy for future work in dispatching or load-balancing for FCFS servers. Methodologically, our method of analyzing CARD using below-above cycles could be of independent interest, as it can be adapted to study other threshold-based policies.

Underlying our results is the insight that in the \emph{size-aware} dispatching setting, it is helpful to have a significant imbalance between the amounts of work at each server. This insight has been made multiple times throughout the size-aware dispatching literature (e.g. \citet{hyytia2012size, harchol1999choosing}). It is in contrast to the natural idea of always balancing the queues, which is helpful in size-oblivious dispatching \citep{zhou2018degree}.

In addition to minimizing mean response time, researchers today are also interested in tail performance. We conjecture that for job sizes with exponential tails and mean $1/\mu$, work under CARD asymptotically decays exponentially with rate $\frac{\mu-\lambda}{n}$ in an $n$-server system, which is worse (i.e. smaller) than the $\mu - \lambda$ decay rate of an M/M/1 under FCFS. An interesting follow-up question is how to balance the tradeoff between mean response time and response time decay rate, perhaps starting with the heavy-traffic regime. We leave this to future work.

\begin{acks}
We thank Rhonda Righter and Esa Hyyti\"a for helpful discussions about optimal dispatching and the Dice policy. We also thank Onno Boxma and Ivo Adan for helpful pointers on the G/M/1, which featured in a previous version of our below-period analysis.

Isaac Grosof was supported by \grantsponsor{nsf}{NSF}{https://www.nsf.gov} grant no. \grantnum{nsf}{CMMI-2307008}, and a Tennenbaum Postdoctoral Fellowship at the Georgia Institute of Technology School of Industrial and Systems Engineering. Ziv Scully conducted this research in part while visiting the Simons Institute for the Theory of Computing, and in part while a FODSI postdoc at Harvard and MIT. He was supported by \grantsponsor{nsf}{National Science Foundation}{https://www.nsf.gov} grant nos. \grantnum{nsf}{CMMI-2307008}, \grantnum{nsf}{DMS-2023528}, and \grantnum{nsf}{DMS-2022448}.
\end{acks}

\bibliographystyle{ACM-Reference-Format}
\bibliography{main}

\appendix

\section{Dice}
\label{sec:dice}
In this section, we give a brief introduction of Dice based on \citet{hyytia2022sequential} and answer some questions readers may be curious about.

\subsubsection{What is Dice?} Let $\mathbf{W}=(W_1,\ldots,W_n)$ be the workload vector. Dice has $n-1$ threshold parameters $\tau_1,\ldots,\tau_{n-1}$ to set. When a job of size $s$ arrives, Dice dispatches the job as follows:
\* Sort the workload vector in increasing order. We assume that $\tau_i$ is the threshold on $W_i$.
\* The dispatcher goes through $\mathbf{W}$ in increasing order and dispatches the job to the first server $i$ that satisfies $\tau_i-W_i>s$. If none of the first $n-1$ servers satisfy this condition, the job is dispatched to the last server. 
\*/

\subsubsection{How complicated is Dice compared with CARD?\nopunct} Dice is easier to implement than CARD. To implement Dice, one only needs to find decent threshold parameters $\tau_1,\ldots,\tau_{n-1}$. Successful implementation of CARD, however, requires tuning thresholds on job sizes (i.e. $m_-$ and $m_+$ for two-server CARD and $m_i$'s for multi-band CARD) in addition to the threshold parameters.

\subsubsection{Why does Dice perform so well in simulations?} Dice has the same spirit as CARD: it keeps one long queue populated mostly with large jobs and maintains short queues for small jobs to get through quickly, thereby increasing $\E{S_{\text{queue}}}$.

\subsubsection{Is Dice heavy-traffic optimal?} This is an open problem. We conjecture that Dice may \emph{not} be heavy-traffic optimal. Consider a two-server system. Under Dice, only jobs that fit into the gap between current work at the shorter server and the threshold get dispatched to the shorter server. However, this will not guarantee that as $\varepsilon \downarrow 0$, all jobs with size less than $m$ gets dispatched to the shorter server, as there is always some probability that the gap is too small for a job with size less than $m$.

\subsubsection{Why is it hard to analyze Dice?} Even with the tools we developed in this paper, Dice is more difficult to analyze than CARD. The main reason is that the dispatching policy under Dice changes continuously with the state of the shorter queue, which makes it challenging to directly apply our below-above cycle analysis. We expect that a refinement of our approach could work for Dice, but substantial extra work is needed.

\section{Deferred Proofs}
\label{sec:proofs}

\subsection{Suboptimality of LWL and SITA in heavy-traffic}
\label{sec:proofs:suboptimality}

In this section, we show that in a two-server system, neither LWL nor SITA is heavy-traffic optimal. Let
\[
K_\pi=\lim_{\varepsilon\downarrow0}\frac{\E{T_\pi}}{\E{W_{\mgone}}},
\]
be the \emph{heavy-traffic constant} of policy~$\pi$. In \cref{thm:k_lwl_and_k_sitae} below, we show that $K_\lwl$ and $K_\sitae$ are strictly greater than $K_\card$ as defined in~\cref{eq:Kcard}.

However, this does not finish the story for SITA, because while SITA-E is the version of SITA that splits the load equally, it is possible to improve SITA's performance by using an unequal load split. The version of SITA that uses the optimal load split is known as SITA-O. Surprisingly, SITA-O can have a significantly better heavy-traffic constant than SITA-E, even though there is very little flexibility in the amount of load each server can receive. In \cref{thm:k_sitao}, we sketch a computation of $K_\sitao$, showing that it, too, is strictly greater than $K_\card$.

For simplicity of computations, we focus on $n = 2$ servers, but the results should generalize to more servers. Similarly, while we continue to assume continuous job size distribution~$S$ (\cref{sec:model:basics}) for simplicity of defining SITA, the results should hold for any job size distribution for which $m$ is well-defined.\footnote{%
    When the distribution has an atom at~$m$, there are corner cases where $K_\sitao = K_\card$. One example is when $S \in \{a, b\}$ with probability~$1$ such that $a \P{S = a} = b \P{S = b}$ \citep{xie2023reducing}.}

\begin{theorem}
\label{thm:k_lwl_and_k_sitae}
In a system with $n=2$ servers and and a continuous job size distribution~$S$, we have $K_\lwl>K_\card$ and $K_\sitae \geq 2 K_\card$.
\end{theorem}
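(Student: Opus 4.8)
The plan is to derive the heavy-traffic constants $K_\lwl$ and $K_\sitae$ in closed form and compare each with $K_\card = 2\P{S \geq m}$, where by \cref{eq:Kcard} (specialized to $n = 2$) the cutoff $m$ satisfies $\E{S\I(S < m)} = \E{S\I(S \geq m)} = \tfrac12 \E{S}$. The single structural fact that both comparisons hinge on is that \emph{small jobs outnumber large jobs}: since $S < m$ on $\{S < m\}$ and $S > m$ almost surely on $\{S \geq m\}$ (as $S$ has no atoms), $\tfrac12\E{S} = \E{S\I(S < m)} < m\P{S < m}$ and $\tfrac12\E{S} = \E{S\I(S \geq m)} > m\P{S \geq m}$, so $\P{S \geq m} < \frac{\E{S}}{2m} < \P{S < m}$; in particular $K_\card = 2\P{S \geq m} < 1$.

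For SITA-E, Poisson splitting turns the system into two independent M/G/1 queues of server speed $\tfrac12$: server~1 sees the Poisson($\lambda\P{S<m}$) stream of jobs of size $(S \given S < m)$ and server~2 the Poisson($\lambda\P{S \geq m}$) stream of jobs of size $(S \given S \geq m)$, each with work arrival rate $\tfrac12\rho$ and hence utilization $\rho$. The Pollaczek--Khinchine formula (rescaled for speed $\tfrac12$) gives $\E{W_1} = \lambda\E{S^2\I(S<m)}/\epsilon$ and $\E{W_2} = \lambda\E{S^2\I(S\geq m)}/\epsilon$, and since a size-$y$ job routed to server~$i$ has response time $2(W_i + y)$,
\[
    \E{T_\sitae} = \frac{2\lambda}{\epsilon}\gp[\big]{\P{S<m}\E{S^2\I(S<m)} + \P{S \geq m}\E{S^2\I(S \geq m)}} + 2\E{S}.
\]
Dividing by $\E{W_\mgone} = \lambda\E{S^2}/(2\epsilon)$ and sending $\epsilon \downarrow 0$ gives $K_\sitae = 4\gp[\big]{\P{S<m}\E{S^2\I(S<m)} + \P{S \geq m}\E{S^2\I(S \geq m)}}/\E{S^2}$; subtracting $2K_\card = 4\P{S \geq m}$ and using $\E{S^2} = \E{S^2\I(S<m)} + \E{S^2\I(S\geq m)}$ yields $K_\sitae - 2K_\card = \frac{4\E{S^2\I(S<m)}}{\E{S^2}}\gp[\big]{\P{S<m} - \P{S\geq m}} \geq 0$, which is the claim (indeed strictly, since $\E{S^2\I(S<m)} > 0$ and $\P{S<m} > \P{S\geq m}$).

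For LWL, the starting point is an exact identity: because LWL ignores job sizes, PASTA says an arrival sees the stationary work vector, is routed to the less-loaded server, and waits exactly $2\min(W_1, W_2)$, so $\E{T_\lwl} = 2\E{\min(W_1,W_2)} + 2\E{S}$. Writing $\min(W_1,W_2) = \tfrac12\gp[\big]{W_\all - |W_1 - W_2|}$ and using $\E{W_\all} \geq \E{W_\mgone}$ (immediate from the work decomposition law, \cref{thm:work_decomp}), we get $\E{T_\lwl} \geq \E{W_\mgone} - \E{|W_1 - W_2|}$. It remains to argue that under LWL the workload imbalance stays on the scale of a single job, $\E{|W_1 - W_2|} = O(1)$ uniformly in $\epsilon$: routing a size-$s$ job to the smaller server sends the gap $g = |W_1 - W_2|$ to $|g - s|$, and $g$ is otherwise nonincreasing, so the embedded gap chain $g \mapsto |g - s|$ has negative drift for large $g$ and an $\epsilon$-independent stationary law with finite mean (using $\E{S^2} < \infty$). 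Hence $\E{T_\lwl} \geq (1 - o(1))\E{W_\mgone}$, so $K_\lwl \geq 1 > K_\card$. (For the exact value $K_\lwl = 1$ one additionally shows $\E{W_\all} \leq (1 + o(1))\E{W_\mgone}$ under LWL, i.e.\ $\E{I W_\all} = o(1)$ in \cref{thm:work_decomp}, or invokes the classical heavy-traffic behavior of the central-queue M/G/2 that LWL emulates.)

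The SITA-E half is routine once $\P{S<m} > \P{S \geq m}$ is in hand. The main obstacle is the LWL half, specifically the uniform-in-$\epsilon$ control of the workload imbalance $\E{|W_1 - W_2|} = O(1)$ (and, if one wants the exact constant rather than just $K_\lwl \geq 1$, the companion bound $\E{I W_\all} = o(1)$) --- both are heavy-traffic statements asserting that LWL keeps the two servers balanced and that neither idles while much work is present.
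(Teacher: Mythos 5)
Your proposal is correct, and the SITA-E half is essentially the paper's argument: both derive $\E{T_\sitae}$ from the two decoupled M/G/1 queues via the P--K formula and then exploit the single fact $\P{S<m} \geq \P{S \geq m}$. The paper keeps this as a one-line inequality (upper-bound $\P{S<m}$ by $1$ on one term, lower-bound by $\P{S\geq m}$ on the other) while you compute the exact difference $K_\sitae - 2K_\card$; the two manipulations are equivalent. Where you genuinely diverge is the LWL half: the paper simply cites $K_\lwl = 1$ to \citet{whitt1993approximations}, whereas you reconstruct the weaker but sufficient bound $K_\lwl \geq 1$ from first principles via $2\min(W_1,W_2) = W_\all - |W_1-W_2|$, the work decomposition law for $\E{W_\all} \geq \E{W_\mgone}$, and a (sketched, not fully rigorous) drift argument that the LWL gap $|W_1-W_2|$ has an $\epsilon$-uniform $O(1)$ stationary mean. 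Your route is more self-contained and makes explicit the fact that $K_\lwl \geq 1$ alone suffices; its cost is that the gap-chain stationarity step would need a proper Foster--Lyapunov argument to be complete. You also supply something the paper's proof silently presupposes: the observation that $K_\card = 2\P{S\geq m} < 1$, derived from the atomlessness of $S$ and the defining equation $\E{S\I(S<m)} = \tfrac12\E{S}$, which is exactly what makes the comparison $K_\lwl = 1 > K_\card$ nonvacuous. That is a worthwhile addition to the exposition.
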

\begin{proof}
It is known that $K_\lwl=1$ (see e.g. \citep{whitt1993approximations}). To compute $K_\sitae$, we first note that under SITA-E, the system decouples to two independent M/G/1 queues. For any $\varepsilon\in(0,1)$, we have
\[
\label{eq:t_sitae}
\E{T_\sitae}&=2\gp*{\frac{\lambda\P{S<m}\E{S^2\given S<m}}{\varepsilon}\P{S<m}+\frac{\lambda\P{S\geqslant m}\E{S^2\given S\geqslant m}}{\varepsilon}\P{S\geqslant m}}\\
&\quad+ 2\E{S}\\
&=2\gp*{\frac{\lambda\E{S^2\I(S<m)}}{\varepsilon}\P{S<m}+\frac{\lambda\E{S^2\I(S\geqslant m)}}{\varepsilon}\P{S\geqslant m} + \E{S}}\\
&\eqnote{(a)}{\geq}\frac{2\lambda\E{S^2}}{\varepsilon}\P{S\geqslant m}+2\E{S}\\
&=4\P{S\geqslant m}\E{W_\mgone}+2\E{S},
\]
where (a) follows from the fact that $\P{S<m}\geqslant\P{S\geqslant m}$. Looking at the $\epsilon \downarrow 0$ limit, we have $K_\sitae \geq 2 K_{\card}$.
\end{proof}

\begin{theorem}
\label{thm:k_sitao}
In a system with $n=2$ servers and a continuous job size distribution~$S$, we have $K_\sitao>K_\card$.
\end{theorem}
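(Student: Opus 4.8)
The plan is to obtain a closed form for $K_\sitao$ and then verify directly that it exceeds $K_\card = 2\P{S \geq m}$ (\cref{eq:Kcard} with $n = 2$). First I would compute the mean response time of SITA run with an \emph{arbitrary} size cutoff $m'$. Because SITA splits the Poisson arrival stream, the two servers become independent M/G/1 queues of speed $\tfrac12$; writing $\rho_1 = \lambda\E{S\I(S < m')}$, $\rho_2 = \rho - \rho_1$, and $\epsilon_i = 1 - 2\rho_i$ (so $\epsilon_1 + \epsilon_2 = 2\epsilon$), the Pollaczek--Khinchine formula gives
\[
    \E{T_{m'}} = 2\E{S} + 2\lambda\gp*{\frac{\P{S < m'}\,\E{S^2\I(S < m')}}{\epsilon_1} + \frac{\P{S \geq m'}\,\E{S^2\I(S \geq m')}}{\epsilon_2}},
\]
which recovers \cref{eq:t_sitae} when $m' = m$, $\epsilon_1 = \epsilon_2 = \epsilon$.

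Next I would pass to the heavy-traffic limit. Stability forces $\rho_1, \rho_2 < \tfrac12$ and $\rho_1 + \rho_2 = 1 - \epsilon$, so both loads tend to $\tfrac12$ and hence the SITA-O cutoff $m'(\epsilon)$ is forced to satisfy $m'(\epsilon) \to m$. Reparametrizing by $x := \epsilon_1/(2\epsilon) \in (0,1)$ and dividing through by $\E{W_\mgone} = \lambda\E{S^2}/(2\epsilon)$, the $2\E{S}$ term vanishes and, with the abbreviations $a_1 := \P{S < m}\,\E{S^2\I(S < m)}$ and $a_2 := \P{S \geq m}\,\E{S^2\I(S \geq m)}$, the ratio converges to $\tfrac{2}{\E{S^2}}\gp*{\tfrac{a_1}{x} + \tfrac{a_2}{1 - x}}$. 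Since $m' \mapsto \epsilon_1$ is a continuous bijection near $m' = m$, any limiting value of $x$ in $(0,1)$ is achievable by a suitable cutoff, and Cauchy--Schwarz gives $\tfrac{a_1}{x} + \tfrac{a_2}{1-x} \geq (\sqrt{a_1} + \sqrt{a_2})^2$ with equality at $x = \sqrt{a_1}/(\sqrt{a_1} + \sqrt{a_2})$. Sandwiching --- an upper bound from the cutoff realizing this minimizing $x$, a lower bound from Cauchy--Schwarz applied to every admissible cutoff together with $m'(\epsilon) \to m$ --- then yields
\[
    K_\sitao = \frac{2\bigl(\sqrt{a_1} + \sqrt{a_2}\bigr)^2}{\E{S^2}}.
\]

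Finally I would compare with $K_\card = 2\P{S \geq m}$. Put $p = \P{S < m}$, $q = \P{S \geq m}$, $b_1 = \E{S^2\I(S < m)}$, $b_2 = \E{S^2\I(S \geq m)}$, so that $a_1 = p b_1$, $a_2 = q b_2$, $p + q = 1$, $b_1 + b_2 = \E{S^2}$, and all four quantities are strictly positive (both $\E{S\I(S < m)}$ and $\E{S\I(S \geq m)}$ equal $\tfrac12\E{S} > 0$). Then $K_\sitao > K_\card$ is equivalent to
\[
    \bigl(\sqrt{a_1} + \sqrt{a_2}\bigr)^2 - q(b_1 + b_2) = (p - q)\,b_1 + 2\sqrt{p q\, b_1 b_2} > 0.
\]
The second summand is strictly positive, so it suffices that $p \geq q$, i.e.\ that $m$ is at least the median of $S$ --- the same fact used in the proof of \cref{thm:k_lwl_and_k_sitae}, which holds because $\E{S\I(S < \mathrm{med})} \leq \E{S\I(S \geq \mathrm{med})}$ forces $\E{S\I(S < \mathrm{med})} \leq \tfrac12\E{S} = \E{S\I(S < m)}$, and $m' \mapsto \E{S\I(S < m')}$ is nondecreasing. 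I expect the middle paragraph to be the main obstacle: one must argue carefully that $m'(\epsilon) \to m$, that SITA-O's optimal split is attained in the limit (rather than traded off against lower-order terms), and close the sandwich. By contrast, the final comparison is a one-line algebraic identity combined with the median observation.
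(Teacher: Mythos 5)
Your proposal is correct and takes essentially the same approach as the paper: both compute the heavy-traffic limit of SITA's response time for an arbitrary cutoff, observe that stability forces the cutoff to converge to $m$ so that only the split of the slack $2\epsilon$ between the two M/G/1 denominators can be optimized, minimize over that split to get $K_\sitao = \tfrac{2}{\E{S^2}}\bigl(\sqrt{\P{S<m}\E{S^2\I(S<m)}}+\sqrt{\P{S\geq m}\E{S^2\I(S\geq m)}}\bigr)^2$, and then conclude strict inequality from $\P{S<m}\geq\P{S\geq m}$ together with positivity of the cross term. The only cosmetic difference is in the final comparison, where the paper replaces one $\P{S<m}$ by $\P{S\geq m}$ before expanding, whereas you expand directly and isolate $(p-q)b_1 + 2\sqrt{pqb_1b_2}$; both routes rely on the same median fact and the same strictly positive cross term.
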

\begin{proof}[Proof sketch]
SITA-O works like SITA-E, except instead of using size threshold~$m$ to split jobs between the servers, it uses a different size threshold~$m'$. The key insight is that in the $\epsilon \downarrow 0$ limit, we must have $m' - m \leq O(\epsilon)$, beause otherwise we would overload one of the servers. This means, roughly speaking, that SITA-O can affect the denominators in \cref{eq:t_sitae}, but it cannot significantly affect the numerators. Specifically, there exists $x \in (-1, 1)$ such that
\[
\E{T_\sitao}&=2\gp*{\frac{\lambda \P{S<m} \E{S^2\given S<m}}{\epsilon (1 - x)} \P{S<m}  +\frac{\lambda \P{S\geq m} \E{S^2 \given S\geq m}}{\epsilon (1 + x)} \P{S\geq m} \pm O(1)}\\
&=\frac{2\lambda}{\epsilon} \gp*{\frac{\E{S^2 \I(S < m)} \P{S<m}}{1 - x} + \frac{\E{S^2 \I(S \geq m)} \P{S\geq m}}{1 + x}} \pm O(1).
\]
Optimizing over the value of $x$ yields
\[
K_\sitao &= \frac{2}{\E{S^2}}\gp*{\sqrt{\E{S^2 \I(S < m)} \P{S < m}} + \sqrt{\E{S^2 \I(S \geq m)} \P{S \geq m}}}^2 \\
&\eqnote{(a)}{\geq} \frac{2 \P{S \geq m}}{\E{S^2}}\gp*{\sqrt{\E{S^2 \I(S < m)}} + \sqrt{\E{S^2 \I(S \geq m)}}}^2 \\
&= K_\card \gp*{1 + \frac{2 \sqrt{\E{S^2 \I(S < m)} \E{S^2 \I(S \geq m)}}}{\E{S^2}}},
\]
where (a) follows from the fact that $\P{S<m}\geqslant\P{S\geqslant m}$.
\end{proof}

\subsection{Stability}
\label{sec:proofs:stability}

As outlined in \cref{sec:stability}, we begin by showing that the short server is stable for any threshold $c\geq0$. Our main tool is a continuous-time Foster-Lyapunov theorem developed in \Citet{meyn1993stability3}. A key component of the theorem is the \emph{infinitesimal generators} for Markov processes. Let $X(t)$ be a Markov process, its infinitesimal generator, $\mathscr{A}$, is the operator defined by
\[
\mathscr{A}V(x)=\lim_{t\downarrow0}\frac{\E{V(X(t))\given X(0)=x}-V(x)}{t}.
\]
The domain of $\mathscr{A}$ is all functions~$V$ for which the limit on the right exists for all~$x$ in the state space. Since work at the short server, $W_s(t)$, is a Markov process, for a function $V$ with left derivative, we may explicitly derive the infinitesimal generator of $W_s(t)$ under CARD: 
\[
\mathscr{A}V(w_s)&=-\tfrac{1}{2}V'(w_s)\,\I(w_s>0)+\lambda p_s\E_{S_s}{V(w_s+S_s)-V(w_s)}\\
&\quad+\I(w_s\leq c)\lambda p_m \E_{S_m}{V(w_s+S_m)-V(w_s)},
\]
where $p_s=\P{S\leq m}$, $p_m=\P{m_-<S<m_+}$, $\E_{S_s}{\cdot}$ is the expectation over the distribution of small jobs (i.e. $S\given S\leq m_-$) and $\E_{S_m}{\cdot}$ is the expectation over the distribution of medium jobs (i.e. $S\given m_-<S<m_+$), $\I(\cdot)$ is the indicator function, and $V'$ is the left derivative.

We now present the continuous-time Foster-Lyapunov theorem \Citep[Theorem~4.4]{meyn1993stability3} below for easy reference.

\begin{theorem}\label{thm:foster_lyapunov}
Suppose that a Markov process $\mathbf{\Phi}$ is a non-explosive right process. If there exists constants $c,d>0$, a function $f\geq1$, a closed petite set $C$, and a function $V\geq0$ that is bounded on $C$ such that for all $x\in O_m$ and $m\in\mathbb{Z}$,
\[\mathscr{A}_m V(x) \leq - \eta f(x) + d \I_C(x),\]
then $\mathbf{\Phi}$ is positive Harris recurrent.
\end{theorem}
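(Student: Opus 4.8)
The plan is not to reprove this from first principles: the statement is precisely the continuous-time Foster--Lyapunov criterion of \citet{meyn1993stability3} (their Theorem~4.4), and a self-contained proof would require importing the whole split-chain and skeleton-chain apparatus of that line of work. What I would do instead is recall the two conceptual ingredients that make the drift condition force positive Harris recurrence, and point to where the real difficulty lies.

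First I would convert the drift inequality into control on excursions away from $C$. Since $\mathbf{\Phi}$ is a non-explosive right process and $V$ lies in the domain of the extended generator on each precompact $O_m$, a Dynkin-type identity applied up to the exit time $T_m$ of $O_m$ gives $\E_x*{V(\mathbf{\Phi}_{t\wedge T_m})} = V(x) + \E_x*{\int_0^{t\wedge T_m}\mathscr{A}_m V(\mathbf{\Phi}_s)\,\d s}$. Substituting $\mathscr{A}_m V \leq -\eta f + d\,\I_C$, using $V\geq0$, and letting $m\to\infty$ (legitimate because non-explosivity forces $T_m\to\infty$) and then $t\to\infty$, I would obtain a bound of the form $\eta\,\E_x*{\int_0^{\tau_C} f(\mathbf{\Phi}_s)\,\d s}\leq V(x)$ for the return time $\tau_C$ to $C$, with an analogous estimate for starting points in $C$ using that $V$ is bounded on $C$. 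Since $f\geq1$, this yields $\sup_{x\in C}\E_x{\tau_C}<\infty$: the process reaches $C$ in finite expected time from everywhere and regenerates out of $C$ with integrable cycle lengths.

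Second I would combine this with petiteness of $C$. Petiteness provides a minorization for a suitable sampled chain (a resolvent or $h$-skeleton chain) restricted to $C$; the split-chain construction of \citet{meyn1993stability3} then manufactures an accessible atom for that chain, and the finite expected return times from the previous step make the chain positive recurrent, hence it admits an invariant probability measure. This lifts to an invariant probability measure for $\mathbf{\Phi}$, and the irreducibility coming from petiteness together with the recurrence of $C$ gives the Harris property, so $\mathbf{\Phi}$ is positive Harris recurrent. The main obstacle -- and the reason I would ultimately just cite \citet{meyn1993stability3} rather than carry this out -- is the continuous-time bookkeeping: one must work with the \emph{extended} generator and the exhaustion $\{O_m\}$ because the genuine generator's domain is too small to contain the natural Lyapunov functions (e.g. $V(w_s)=w_s$, as used in \cref{thm:stability_short}), one must justify the localized Dynkin identity and the passage $m\to\infty$ via non-explosivity, and one must move recurrence and invariance back and forth between $\mathbf{\Phi}$ and the discrete skeleton where the atom machinery lives. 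Within that framework these steps are routine, so for our purposes it suffices to check the hypotheses of \cref{thm:foster_lyapunov} -- the essential one being a drift bound of the stated form for $V(w_s)=w_s$, as carried out in \cref{thm:stability_short}.
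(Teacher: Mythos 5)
Your approach matches the paper exactly: \Cref{thm:foster_lyapunov} is stated only ``for easy reference'' and is not proved there --- it is Theorem~4.4 of \citet{meyn1993stability3}, and the paper simply cites it, just as you propose to do. Your sketch of the underlying mechanism (localized Dynkin identity on the exhaustion $\{O_m\}$, non-explosivity to pass $m\to\infty$, the bound $\eta\,\E_x\!\int_0^{\tau_C} f \leq V(x)$ giving finite expected return times via $f\geq 1$, and petiteness $\Rightarrow$ minorization $\Rightarrow$ split-chain atom $\Rightarrow$ positive Harris recurrence) is an accurate description of how the cited proof goes, and you are right that the genuinely technical content lies in the extended-generator and skeleton-chain bookkeeping, which is why citing is the appropriate move here.
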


Here, $O_m$ is a family of precompact sets that increases to the entire state space as $m\to\infty$ and $\mathscr{A}_m$ is the generator for the truncated process restricted to $O_m$. This restriction is in place mainly to handle possibly explosive processes. Our process $\mathbf{W}(t)$ is not explosive. More importantly, the Lyapunov function $V$ we consider in \cref{thm:stability_short} is increasing and differentiable. It follows that $\mathscr{A}_mV(x)\leq\mathscr{A}V(x)$ for all $x$. It therefore suffices for us to apply theorem \cref{thm:foster_lyapunov} with $\mathscr{A}V(x)$ instead.

\restate*\ref{thm:stability_short}
\begin{proof}
\label{pf:stability_short}
We first check the preconditions of \cref{thm:foster_lyapunov} hold for $W_s$ under CARD for any $\epsilon>0$. $W_s$ is obviously non-explosive. Let $V(W_s)=W_s$, $C=\{W_s: W_s\leq c\}$, and $f(w_s)\equiv1$. Then we have
\begin{align*}
\mathscr{A}V(W_s)=\beta \I(W_s\leq c)-\alpha \I(W_s>c)
\end{align*}
Since $\alpha>0$ for any $\epsilon>0$, positive Harris recurrence of $W_s(t)$ follows from \cref{thm:foster_lyapunov} if $C$ is a closed petite set. We now check this.

It follows from \citet[Theorem~3.1]{meyn1993stability3} that $W_s(t)$ is non-evanescent. By \citet[Theorem~4.2]{meyn1993stability2}, the $K_a$-chain of $W_s(t)$ is an irreducible $T$-process with everywhere nontrivial continuous component. By \citet[Theorem~4.1(i)]{meyn1993stability2}, $C$ is a petite set. 

Given positive recurrence of $W_s(t)$ and \citet[Theorem~4.2]{meyn1993stability2}, we conclude from \citet[Theorem~3.2]{meyn1994stability} that $W_s(t)$ is also ergodic.
\end{proof}

\restate*\ref{thm:ssc_short}
\begin{proof}
\label{pf:ssc_short}
Define $V(w_s)=(c-w_s)^+$ and fix some $\theta>0$. Since $W_s$ has a stationary distribution, we can apply the rate conservation law \citep{miyazawa1994rate} to $e^{\theta V(W_s)}$, which yields
\[
\frac{\theta}{n}\E_{\pi}{e^{\theta V(W_s)}\I(V(W_s) < c)}+\lambda_{s.m}\E_{\pi,S_{s,m}}{e^{\theta V(W_s+)}-e^{\theta V(W_s)}}=0.\]
Here, $\pi$ is the stationary distribution of $W_s$ and $\lambda_{s,m}$ is the arrival rate into a short server from small and medium jobs, and $V(W_s+)$ is the value of $V(W_s)$ immediately after a job arrival of size $S_{s,m}$. Rearranging yields
\[
    \frac{\theta}{n}\E_{\pi}{e^{\theta V(W_s)}}
    + \lambda_{s,m}\E_{\pi,S_{s,m}}{e^{\theta V(W_s+)}-e^{\theta V(W_s)}}= \frac{\theta}{n}\E_{\pi}{e^{\theta V(W_s)} \I(V(W_s) = c)}.
\]
We drop the RHS and work with
\[
\label{eq:RCL_emptiness_bound}
\frac{\theta}{n}\E_{\pi}{e^{\theta V(W_s)}}+\lambda_{s,m}\E_{\pi,S_{s,m}}{e^{\theta V(W_s+)}-e^{\theta V(W_s)}}\geq0.\]
We first analyze the second term on the LHS. Conditioning on a given state $W_s$, we have
\[\MoveEqLeft\E_{S_{s,m}}*{e^{\theta V(W_s+)}-e^{\theta V(W_s)}\given W_s}\\
&\eqnote{(a)}{=} \E_{S_{s,m}}*{e^{\theta (V(W_s)-S_{s,m})^+}-e^{\theta V(W_s)}\given W_s}\\
&= \E_{S_{s,m}}*{e^{\theta (V(W_s)-S_{s,m})^+}-e^{\theta(V(W_s)-S_{s,m})}+e^{\theta(V(W_s)-S_{s,m})}-e^{\theta V(W_s)}\given W_s}\\
&\eqnote{(b)}{=} \E_{S_{s,m}}*{e^{\theta (V(W_s)-S_{s,m})^+}-e^{\theta(V(W_s)-S_{s,m})}\given W_s}+e^{\theta V(W_s)}\E_{S_{s,m}}*{(e^{-\theta S_{s,m}}-1)}\\
&\eqnote{(c)}{=} \E_{S_{s,m}}*{(1-e^{\theta(V(W_s)-S_{s,m})})\,\I(V(W_s)<S_{s,m})\given W_s}+e^{\theta V(W_s)}(\widetilde{S_{s,m}}(\theta)-1),\label{eq:RCL_emptiness_bound_2}
\]
where
\*[(a)] follows from $V(W_s+)=(c-W_s-S_{s,m})^+=(V(W_s)-S_{s,m})^+$,
\* follows from the independence of the arriving job size and $V(W_s)$ for any given $W_s\leq c$, and
\* follows from the definition of LST and the fact that
\[
e^{\theta (V(W_s)-S_{s,m})^+}-e^{\theta(V(W_s)-S_{s,m})}=
\begin{cases}
0,\quad &\text{if }V(W_s)\geq S_{s,m}\\
1-e^{\theta(V(W_s)-S_{s,m})}, &\text{if }V(W_s)<S_{s,m}.
\end{cases}
\]
\*/

Taking expectation over $\pi$ on both sides of \eqref{eq:RCL_emptiness_bound_2} and substituting into \eqref{eq:RCL_emptiness_bound} gives
\[
\frac{\theta}{n}\E_{\pi}{e^{\theta V(W_s)}}
&\geq \lambda_{s,m}\E_{\pi}*{e^{\theta V(W_s)}}(1-\widetilde{S_{s,m}}(\theta))-\lambda_{s,m}\E_{\pi,S_{s,m}}*{(1-e^{\theta(V(W_s)-S_{s,m})})\,\I(V(W_s)<S_{s,m})} \\
&\eqnote{(d)}{=} \theta\gp*{\frac{n-1}{n}+(n-1)\beta}\E_\pi*{e^{\theta V(W_s)}}\widetilde{(S_{s,m})_e}(\theta) \\*
&\quad - \theta\gp*{\frac{n-1}{n}+(n-1)\beta}\E_{\pi,S_{s,m}}*{\frac{1-e^{\theta(V(W_s)-S_{s,m})}}{\theta\E{S_{s,m}}}\I(V(W_s)<S_{s,m})},\mkern24mu
\label{eq:RCL_emptiness_bound_3}
\]
where (d) follows from $\frac{n-1}{n}+(n-1)\beta=\lambda_{s,m}\E{S_{s,m}}$, which is the load of the short and medium jobs, and the fact that
\[\widetilde{(S_{s,m})_e}(\theta)=\frac{1-\widetilde{S_{s,m}}(\theta)}{\theta\E{S_{s,m}}},\]
which holds for a general job size distribution (with or without a density function). See e.g. \Citet{ross1995stochastic}.

Since
\[\left(1-e^{-\theta\left(S_{s,m}-V(W_s)\right)}\right)\I(V(W_s)<S_{s,m})\leq1-e^{-\theta S_{s,m}},\]
we have
\[\E_{\pi,S_{s,m}}*{\frac{1-e^{\theta(V(W_s)-S_{s,m})}}{\theta\E{S_{s,m}}}\I(V(W_s)<S_{s,m})}\leq\widetilde{(S_{s,m})_e}(\theta).\]
Since $\theta\geq0$ is chosen so that $\widetilde{(S_{s,m})_e}(\theta)>\frac{1}{n(n-1)\beta+n-1}$, we have $\gp*{\tfrac{n-1}{n}+(n-1)\beta}\widetilde{(S_{s,m})_e}(\theta)-\tfrac{1}{n}>0$. Thus, we rearrange \eqref{eq:RCL_emptiness_bound_3} to obtain
\[\E_\pi{e^{\theta V(W_s)}}\leq\frac{\gp*{n(n-1)\beta+n-1}\widetilde{(S_{s,m})_e}(\theta)}{\gp*{n(n-1)\beta+n-1}\widetilde{(S_{s,m})_e}(\theta)-1}.\]
Markov's inequality then gives
    \[
        \P{V(W_s) < x} \leq \frac{\gp*{n(n-1)\beta+n-1}\widetilde{(S_{s,m})_e}(\theta)}{\gp*{n(n-1)\beta+n-1}\widetilde{(S_{s,m})_e}(\theta)-1}e^{-\theta x},
    \]
and the lemma follows.
\end{proof}

\restate*\ref{thm:stability}
\begin{proof}
\label{pf:stability}
Part (a) is a corollary of \cref{thm:idleness_short}. For part (b), we first establish the result for $n=2$ servers, then show how it generalizes to $n>2$ servers. For $n=2$, we denote the state as $\mathbf{W}(t)=(W_s(t),W_\ell(t))$.

To establish~(b), we first apply \Citet[Theorem~1]{foss2012stability} to the pre-jump chain $\{\mathbf{W}(T_n-)\}$, where $T_n$ is the arrival time of the $n$th job. Conditions A1-A3 in Theorem 1 are fulfilled by \cref{thm:stability_short}. Define
\[L_2(w_\ell)=\lambda w_\ell,\qquad f(w_s)=-\tfrac{1}{2}+\rho_\ell+\rho_m\I(w_s>c),\qquad h(x)=\tfrac{1}{2}e^{-x}.\]
We now verify conditions B1 and B2 are met with above choices of $L_2$, $f$, and $h$.

\paragraph{Condition B1:}
\[\sup_{w_s,w_\ell}\E{|L_2(W_\ell(T_1-))-L_2(w_\ell)|\given W_\ell(0-)=w_\ell,W_s(0-)=w_s,\text{Arrival at time $0$}}\leq1.\]

\paragraph{Condition B2:}
Let $\pi_s$ be the stationary distribution of $W_s(t)$. By PASTA, $\pi_s$ is also the stationary distribution of $\{W_s(T_n-)\}$. We have
\[\E_{\pi_s}{f(W_s)}&=-\tfrac{1}{2}+\rho_\ell+\rho_m\P{W_s>c}\\
&\stackrel{(a)}{\leq} -\tfrac{1}{2}+\rho_\ell+\rho_m\frac{(\rho_m+\rho_s)-\frac{1}{2}+\frac{\delta}{2}}{\rho_m}\\
&=-1+\rho+\frac{\delta}{2}=-\epsilon+\frac{\delta}{2}<0,
\]
where (a) comes from \cref{thm:prob_above_below} and PASTA. We then compute\footnote{%
    The conditional probabilities below are a slight abuse of notation. They should be understood as referring to the probability measure induced by the pre-jump Markov chain starting from state $(w_s, w_\ell)$.}
\[
    \MoveEqLeft
    \E{L_2(W_\ell(T_1-))-L_2(w_\ell))\given W_\ell(0-)=w_\ell,W_s(0-)=w_s,\text{Arrival at time $0$}}\\
    &= \lambda \E*{\int_0^{T_1-}-\tfrac{1}{2}\I(W_\ell(t)>0)\d{t}\given W_\ell(0-)=w_\ell,W_s(0-)=w_s,\text{Arrival at time $0$}}
    \\* &\quad +\rho_\ell+\rho_m\I(w_s>c)\\
    &= f(w_s)+\frac{\lambda}{2}\E*{\gp*{T_1-(w_\ell+S)}^+\given W_\ell(0-)=w_\ell,W_s(0-)=w_s,\text{Arrival at time $0$}}\\
    &\leq f(w_s)+\frac{\lambda}{2}\E*{\gp*{T_1-w_\ell}^+}=f(w_s)+\frac{\lambda}{2}\frac{1}{\lambda}e^{-\lambda w_\ell}=f(w_s)+h(L_2(w_\ell)).
\]
It now follows from \citet[Theorem~1]{foss2012stability} that the embedded pre-jump chain $\{\mathbf{W}(T_n-)\}$ is positive Harris recurrent.

Since $\{\mathbf{W}(T_n-)\}$ is positive Harris recurrent and easily seen to be $\{(0, 0)\}$-irreducible, the expected number of steps until returning to $(0,0)$ is finite from any starting state. The time between steps is exponentially distributed with mean~$1/\lambda$, so we conclude from Wald's equation that the expected return time of the original process $\mathbf{W}(t)$ to state $(0,0)$ is also finite. Positive Harris recurrence of $\mathbf{W}(t)$ immediately follows.
We now generalize the above proof to $n>2$ servers. To begin with, we define a vector-valued process $\mathbf{W}_{\text{short servers}}(t)=(W_{s_1}(t),\ldots,W_{s_{n-1}}(t))$. Under multiserver CARD, $W_{\text{short servers}}(t)$ has the following properties:
\* $\mathbf{W}_{\text{short servers}}(t)$ is a Markov process of its own and is Harris ergodic.
\* Since stationary distribution of the short servers are i.i.d., the stationary distribution of $\mathbf{W}_{\text{short servers}}$ is the product of stationary distributions of the short servers in isolation.
\*/
With these two properties in hand, the argument for $n=2$ servers as presented above works for $n>2$ servers with the same functions $h$, $L_2$, and the following~$f$:
\[
f(\mathbf{w}_{\text{short servers}})=-\frac{1}{n}+\rho_\ell+\frac{\rho_m}{n-1}\sum_{i=1}^{n-1}\I(w_{s_i}>c).
\qedhere
\]
\end{proof}

\subsection{Response Time Analysis}

\restate*\ref{thm:work_short}
\begin{proof}
    \label{pf:work_short}
    As stated in the proof sketch, $(W_s - c \given W_s > c)$ has the same distribution as an M/G/1 with vacations.
    \* The job size distribution is $S_s = (S \given S < m_-)$. In particular, using the fact that $S_s$ is stochastically dominated by~$m_+$, one can show that $(S_s)_e$ is stochastically dominated by a uniform distribution on $[0, m_+]$.\footnote{%
        It is not in general true that $S$ being dominated by $R$ implies $S_e$ is dominated by $R_e$. This is specific to the case that $R$ is a deterministic constant.}
    \* The load is $1 - 2 \alpha$, and so the slackness is $2 \alpha$.
    \** The reason we use $2 \alpha$ instead of $\alpha$ is because the server operates at speed~$\frac{1}{2}$. By ``doubling the clock speed'', the server speed becomes~$1$, and the distribution of $(W_s - c \given W_s > c)$ is unaffected. This makes it easy to apply standard results about the M/G/1 with vacations.
    \* Let $U$ denote the vacation length distribution. It is hard to characterize exactly, but because $W_s - c \leq m_+$ at the start of an above period, $U_e$ is stochastically dominated by a uniform distribution on $[0, m_+]$.
    \*/
    The desired bounds follow from the work decomposition formula for the M/G/1 with vacations \citep{fuhrmann1985stochastic}. Specifically, for an M/G/1 with vacations, we can write its steady-state work $W_{\mathrm{M/G/1/vac}}$ as an independent sum of random variables with distributions $W_\mgone$ and $U_e$. This means
    \[
        \E{W_s - c \given W_s > c} = \E{W_{\mathrm{M/G/1/vac}}} &= \E{W_\mgone} + \E{U_e}, \\
        \E{(W_s - c)^2 \given W_s > c} = \E{W_{\mathrm{M/G/1/vac}}^2} &= \E{W_\mgone^2} + 2 \E{W_\mgone} \E{U_e} + \E{U_e^2}.
    \]
    Applying the PK formula with the relevant parameters, we obtain
    \[
        \E{W_\mgone} &= \frac{(1 - 2 \alpha) \E{(S_s)_e}}{2 \alpha}
        \leq \frac{(1 - 2\alpha) m_+}{4 \alpha}, \\
        \E{W_\mgone^2} &\leq \frac{(3 - 4 \alpha (2 - \alpha)) m_+^2}{24 \alpha ^2}.
    \]
    The result then follows from $\E{U_e} \leq \frac{m_+}{2}$ and $\E{U_e^2} \leq \frac{m_+^3}{3}$.
\end{proof}

\restate*\ref{thm:work_long_below+above}
\begin{proof}
    \label{pf:work_long_below+above}
    The proof is very similar to that of \cref{thm:Wl_work_change}, so we give only the key steps. 
    Applying the Palm inversion formula \citep{baccelli2002elements} to $W_\ell \I(W_s>c)$ gives
    \[
        \E{W_\ell\I(W_s> c)}
        = \frac{1}{\E{A+B}} \E_c^0*{\int_{B}^{A+B} W_\ell(t) \d{t}},
    \]
    where we can start the integral at~$B$ and remove the indicator because $W_s(t) > c$ exactly during above periods, which corresponds to $t \in [B, A+B)$. Expanding this using \cref{eq:Wl_process} and noting the independence of $W_\ell(0)$ from the below-above cycle, we obtain
    \[
        \MoveEqLeft
        \E{W_\ell\I(W_s> c)} - \frac{\E{A}}{\E{A+B}} \E_c^0{W_\ell(0)} \\
        &= \frac{1}{\E{A+B}}\E_c^0*{\int_B^{A+B} \max\curlgp*{-\Delta_\ell(0,t)+\Sigma_\ell^m(0,t)+\Sigma_\ell^\ell(0,t), -W_\ell(0)}\d{t}}.
    \]
    Applying \cref{eq:prob_above_below_renewal} to the left-hand side, we see it suffices to give bounds on the right-hand side. The same reasoning as the proof \cref{thm:Wl_work_change} yields
    \[
        \vgp[\big]{\E{W_\ell\I(W_s> c)} - q_A \E_c^0{W_\ell(0)}}
        \leq \frac{1}{\E{A+B}}\E_c^0*{\int_B^{A+B} t \d{t}}
        = \frac{\E{(A + B)^2 - B^2}}{2 \E{A + B}}.
    \]
    The result then follows from a computation similar to the end of the proof of \cref{thm:Wl_work_change}.
\end{proof}

\restate*\ref{thm:rt_upper_explicit}
\begin{proof}
    \label{pf:rt_upper_explicit}
    Consider a tagged job arriving to the system. Recall from \cref{eq:rt_CARD} that
    \[
        \E{T_{\card}} - 2 \E{S}
        &\leq 2 (p_s + p_m) \E{W_s} + 2 p_m \E{W_\ell \I(W_s > c)} + 2 p_\ell \E{W_\ell}.
    \]
    We now bound the work expectations and probabilities in the last line.
    \* \cref{thm:work_short} implies $\E{W_s} \leq c + q_A \E{W_s - c \given W_s > c} \leq c + \frac{q_A m_+}{\alpha}$.
    \* \cref{thm:Wl_work_change, thm:work_long_below+above} imply, after some simplification,
    \[
        \E{W_\ell \I(W_s > c)}
        \leq q_A \E{W_\ell} + \frac{q_A m_+}{\alpha^2} + \frac{4 q_A q_B c}{\beta} + \frac{\sqrt{2 q_A q_B m_+ c}}{\alpha \sqrt{\beta}}.
    \]
    \* \cref{thm:bound_on_W} bounds $\E{W_\ell}$.
    \*/
    From these bounds and some simplification, using facts like $p_s + p_m + p_\ell = 1$ and $\alpha \leq 1$, we obtain
    \[
        \E{T_{\card}} - 2 \E{S}
        &\leq 2 (p_\ell + q_A) \gp*{1 + \frac{\delta}{\epsilon}} \E{W_\mgone}
            + \frac{4 q_A m_+}{\alpha^2}
            + \frac{m_+ \sqrt{q_A}}{\alpha \sqrt{\epsilon}}
            \\* &\quad
            + 6 c + \frac{8 q_A q_B c}{\beta} + \frac{2 \sqrt{2 q_A q_B m_+ c}}{\alpha \sqrt{\beta}}
            + \frac{8 c \sqrt{\delta}}{\alpha^2 \beta \epsilon}.
    \]
    We now use \cref{thm:prob_above_below} to express as much as possible on the right-hand side in terms of $\alpha$, $\beta$, $\delta$, and~$\epsilon$. After some simplification, including using the preconditions of the theorem, we obtain
    \[
        \E{T_{\card}} - 2 \E{S}
        &\leq 2 \gp*{p_\ell + \frac{2 \beta}{\alpha + \beta}} \gp*{1 + \frac{\delta}{\epsilon}} \E{W_\mgone}
            + \frac{8 m_+ \beta}{\alpha^2 (\alpha + \beta)}
            + \frac{m_+ \sqrt{2 \beta}}{\alpha \sqrt{\epsilon (\alpha + \beta)}}
            \\* &\quad
            + \gp*{\frac{12 m_+}{\beta} + \frac{32 m_+ \alpha}{\beta (\alpha + \beta)^2}} \log\frac{3}{2 \beta \delta}
            + \frac{4 m_+}{\alpha + \beta} \sqrt{\frac{2}{\alpha \beta} \log\frac{3}{2 \beta \delta}}
            + \frac{16 m_+ \sqrt{\delta}}{\alpha^2 \beta^2 \epsilon} \log\frac{3}{2 \beta \delta}.
    \]
    Finally, we observe that $p_\ell = \P{S > m_+} \leq \P{S > m}$ and simplify further.
\end{proof}

\subsection{Extension to Any Number of Servers}
\label{sec:proofs:multi}

\restate*\ref{thm:rt_upper_heavy}
\begin{proof}[Proof for $n \geq 2$ servers]
Fix a short server $s_i$. Notice that under multi-server CARD, $W_{s_i}$ are i.i.d. Thus, the analysis applies to any short server. Let $A$ and $B$ be the above and below periods of $W_{s_i}$. Using a similar proof as that of \cref{thm:Wl_work_change}, we obtain
    \[
        \vgp[\big]{\E{W_\ell}-\E_c^0{W_\ell(0)}}
        \leq \gp*{\sqrt{q_A \E{A_e}} + \sqrt{q_B \E{B_e}}}^2
        \leq \frac{q_A m_+}{2 \alpha^2} + \frac{4 q_B c}{\beta}.
    \]
For any short server $i$, we have
    \[
        q_A=\P{W_{s_i}>c} \leq \frac{\beta+\frac{1}{n}\delta}{\alpha + \beta} \leq \frac{2 \beta}{\alpha + \beta} \qquad
        \text{and} \qquad
        q_B=\P{W_{s_i}\leq c} \leq \frac{\alpha}{\alpha + \beta}.
    \]
The proof is similar to that of \cref{thm:prob_above_below}. Note that $q_A$ and $q_B$ are the same for all short servers because $W_{s_i}$ are i.i.d. in steady state.
\cref{thm:mean_excess_bound_below, thm:mean_excess_bound_above} follow from the same arguments as the two-server case. We would like to obtain a counterpart of \cref{thm:bound_on_W}. To this end, we use a multi-server version of \cref{thm:work_decomp}. Note that we have
\[
I W_\all&=\frac{1}{n}\sum_{i=1}^{n-1}\I(W_{s_i}=0)W_\ell+\frac{1}{n}\sum_{i=1}^{n-1}\I(W_\ell=0)W_{s_i}+\frac{1}{n}\sum_{k\neq j}\I(W_{s_k}=0)W_{s_j}
\]
We bound these three terms separately.
\[
\frac{1}{n}\sum_{i=1}^{n-1}\E{\I(W_{s_i}=0)W_\ell}&\eqnote{(a)}{=}\frac{n-1}{n}\E{W_\ell\I(W_{s_1}=0)}\\
&\leq\frac{n-1}{n}\sqgp*{\delta \gp*{\E{W_\all} + \frac{q_A m_+}{4 \alpha^2} + \frac{4 q_B c}{\beta}} + \frac{2 c \sqrt{2 \delta}}{\beta}},\\
\frac{1}{n}\sum_{i=1}^{n-1}\E{\I(W_\ell=0)W_{s_i}}&\eqnote{(b)}{=}\frac{n-1}{n}\E{\I(W_\ell=0)W_{s_1}}\leq\frac{n-1}{n}\gp*{n\epsilon c+\frac{m_+ \sqrt{q_A n\epsilon}}{2\sqrt{2}\alpha}},\\
\frac{1}{n}\sum_{k\neq j}\E{\I(W_{s_k}=0)W_{s_j}}&\eqnote{(c)}{\leq}\frac{(n-1)(n-2)}{n}\P{W_{s_k}=0}\E{W_j}\leq(n-1)\gp*{c+\frac{m_+q_A}{\alpha}}\delta,
\]
where (a), (b), and (c) all follow from the fact that $W_{s_1},\ldots, W_{s_{n-1}}$ are i.i.d. in steady state. Proof of the other bounds are similar to their counterparts in \cref{thm:IW_analysis, thm:bound_on_W}. \cref{thm:work_decomp} gives
\[
    \E{W_\ell}
    \leq \E{W_\all}
    &\leq \gp[\bigg]{1 + (n-1)\frac{\delta}{\epsilon}} \E{W_\mgone} \\* &\quad
        + n(n-1) c +\sqrt{n}(n-1) \frac{m_+ \sqrt{q_A}}{2\sqrt{2} \alpha \sqrt{\epsilon}}
        + \frac{8}{n}\frac{ c \sqrt{\delta}}{\alpha^2 \beta \epsilon}+\frac{n(n-1)}{\epsilon}\gp*{c+\frac{m_+q_A}{\alpha}}\delta.
\]
from \cref{thm:prob_above_below}. Since $W_{s_1},\ldots, W_{s_{n-1}}$ are i.i.d. in steady state, we have, by PASTA,
\[
\P{\text{A medium job joins a short server queue}}=\P{W_{s_1}\leq c}=q_B
\]
Therefore, using an argument similar to that for \cref{thm:rt_upper_explicit}, we have
\[
        \E{T_{\card}} - n \E{S}
        &\leq n (p_\ell + q_A) \gp*{1 +(n-1) \frac{\delta}{\epsilon}} \E{W_\mgone}
            + \frac{2n q_A m_+}{\alpha^2}
            \\* &\quad
            + n(n-1)\sqrt{n}\frac{m_+ \sqrt{q_A}}{2\sqrt{2} \alpha \sqrt{\epsilon}}
            + \gp*{n^2(n-1)+n} c + \frac{4n q_A q_B c}{\beta} + \frac{n \sqrt{2 q_A q_B m_+ c}}{\alpha \sqrt{\beta}}
            \\* &\quad
            + \frac{8 c \sqrt{\delta}}{\alpha^2 \beta \epsilon}+n^2(n-1)\gp*{c+\frac{m_+q_A}{\alpha}}\frac{\delta}{\epsilon}.
    \]
This can be further expanded using the bounds for $q_A$ and $q_B$, as well as the expression of $c$.
\[
\E{T_{\card}} - n \E{S}
        &\leq n \gp*{p_\ell + \frac{2\beta}{\beta+\alpha}} \gp*{1 +(n-1) \frac{\delta}{\epsilon}}\E{W_\mgone}+\frac{4n\beta m_+}{\alpha^2(\alpha+\beta)}+n(n-1)\frac{m_+\sqrt{n\beta}}{2\alpha\sqrt{\epsilon(\alpha+\beta)}}
        \\*&\quad
        +\gp*{\frac{n(n-1)(n^2(n-1)+n)m_+}{\beta}+\frac{8n^2(n-1)m_+\alpha}{\beta(\alpha+\beta)^2}}\log\frac{n+1}{n\beta\delta}\\*&\quad
        +\frac{2n\sqrt{n(n-1)}m_+}{\alpha+\beta}\sqrt{\frac{1}{\alpha\beta}\log\frac{n+1}{n\beta\delta}}+\frac{8n(n-1)m_+\sqrt{\delta}}{\alpha^2\beta^2\epsilon}\log\frac{n+1}{n\beta\delta}
        \\*&\quad
        +\underbrace{n^2(n-1)\gp*{\frac{n(n-1)m_+}{\beta}\log\frac{n+1}{n\beta\delta}+\frac{2m_+\beta}{\alpha(\alpha+\beta)}}\frac{\delta}{\epsilon}}_{\mathcal{T}}
\]
At this point, we note that the upper bound for $\E{T_{\card}} - n \E{S}$ is, after letting $n=2$, the same as that in \cref{thm:rt_upper_explicit}, except for $\mathcal{T}$. Thus, setting
    \[
        \alpha = \Theta(1), \qquad
        \beta = \Theta\gp[\bigg]{\epsilon^{1/3} \gp[\bigg]{\log\frac{1}{\epsilon}}^{2/3}}, \qquad
        \text{and} \qquad
        c = \frac{n(n-1)m_+}{\beta} \log \frac{n+1}{n \beta \delta},
    \]
and noting that $\mathcal{T}\to0$ as $\epsilon\downarrow0$, we conclude that the bound yields the same heavy-traffic scaling as that in \cref{thm:rt_upper_explicit}.  Finally, we note that $K_\card$ emerges because
\[
\lim_{\epsilon\downarrow0}np_\ell=n\P{S>m}=K_\card.
\qedhere
\]
\end{proof}

\section{Additional Simulations}
Our additional simulations applies flexible CARD with three parameters to $n=10$ servers. We simulate 40 trials for each data point, with $10^7$ arrivals per trial for $\mathsf{cv} = 1$ and $\mathsf{cv} = 10$ and $3\times10^7$ job arrivals per trial for $\mathsf{cv} = 100$. We show 95\% confidence intervals when wider than the marker size. 

\Cref{fig:n-server-flexible} show that, for $n=10$ servers, flexible CARD has decent performance when the coefficient of variation is small. However, for large coefficients of variation, flexible CARD does not perform well, even if we use LWL to dispatch small and medium jobs among the short servers. Specifically, when cv=10, flexible CARD deviates from Dice, although still better than LWL and SITA-E. When cv=100, flexible CARD performs worse than SITA-E at high loads. The unsatisfactory performance of flexible CARD for $n=10$ servers motivates us to design multi-band CARD.

\begin{figure}
\begin{subfigure}[b]{0.31\textwidth}
    \includegraphics[width=\linewidth]{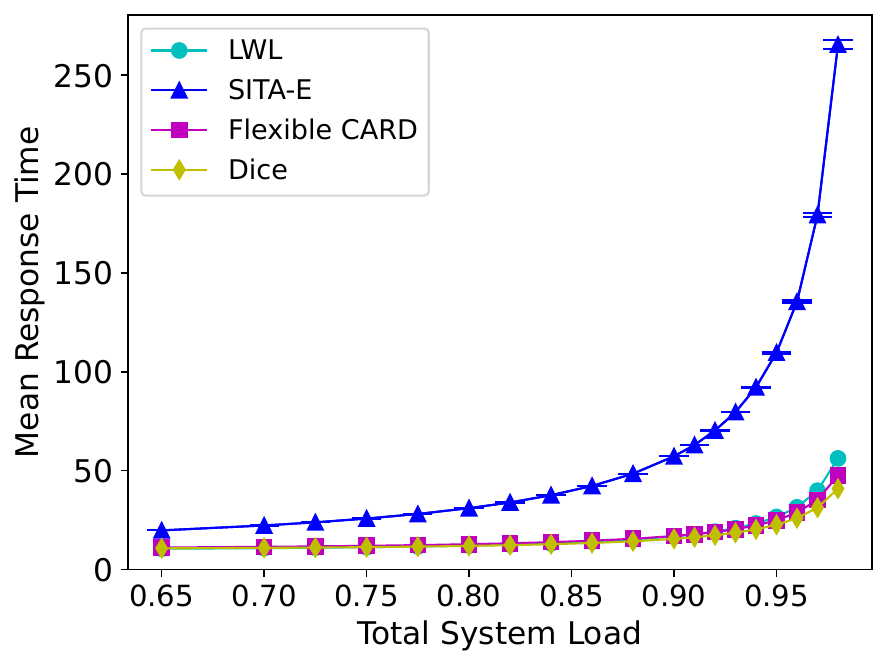}
    \includegraphics[width=\linewidth]{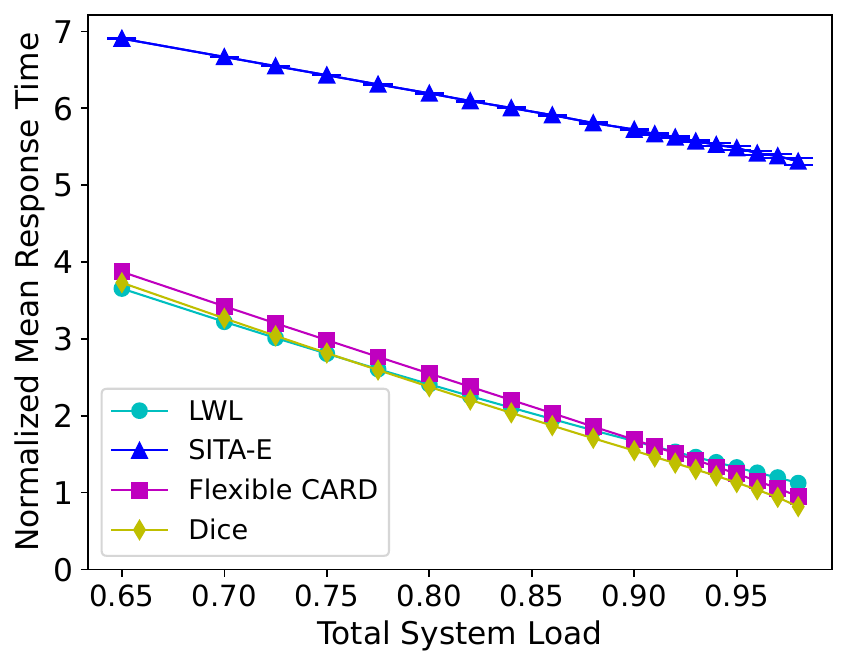}
    \caption{$10$ servers, $\mathsf{cv} = 1$}
\end{subfigure}\hfill
\begin{subfigure}[b]{0.31\textwidth}
    \includegraphics[width=\linewidth]{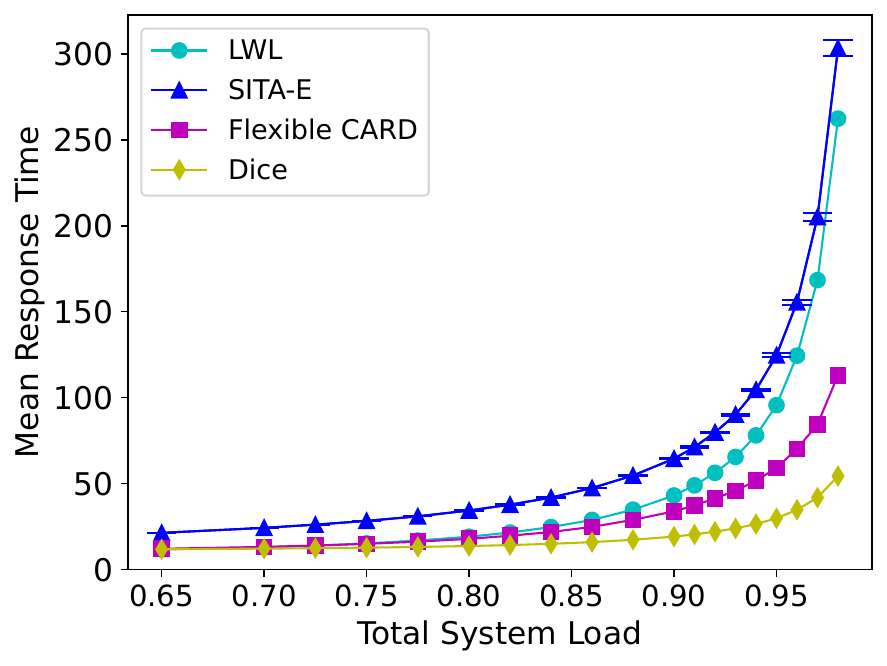}
    \includegraphics[width=\linewidth]{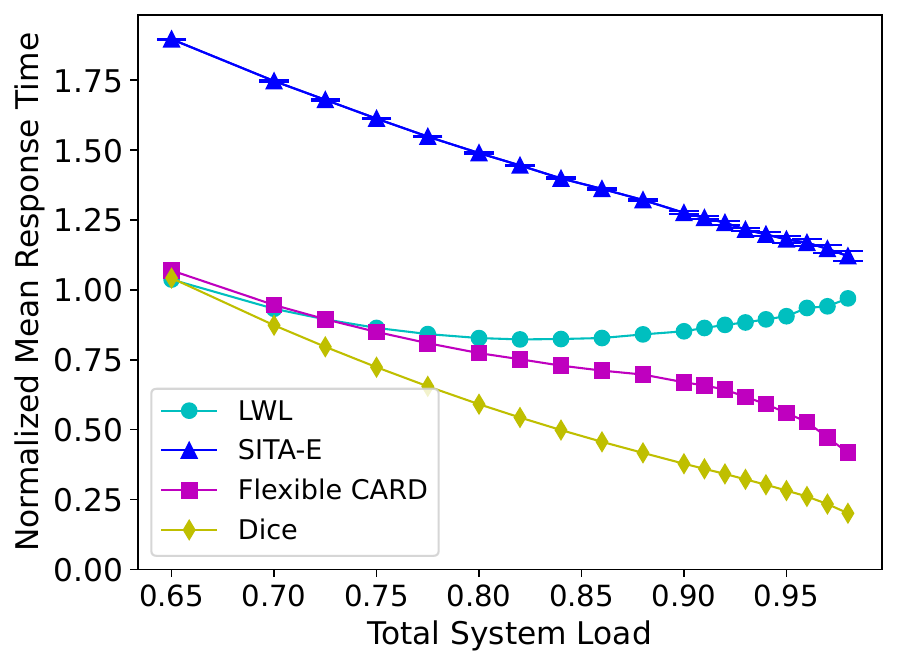}
    \caption{$10$ servers, $\mathsf{cv} = 10$}
\end{subfigure}\hfill
\begin{subfigure}[b]{0.31\textwidth}
    \includegraphics[width=\linewidth]{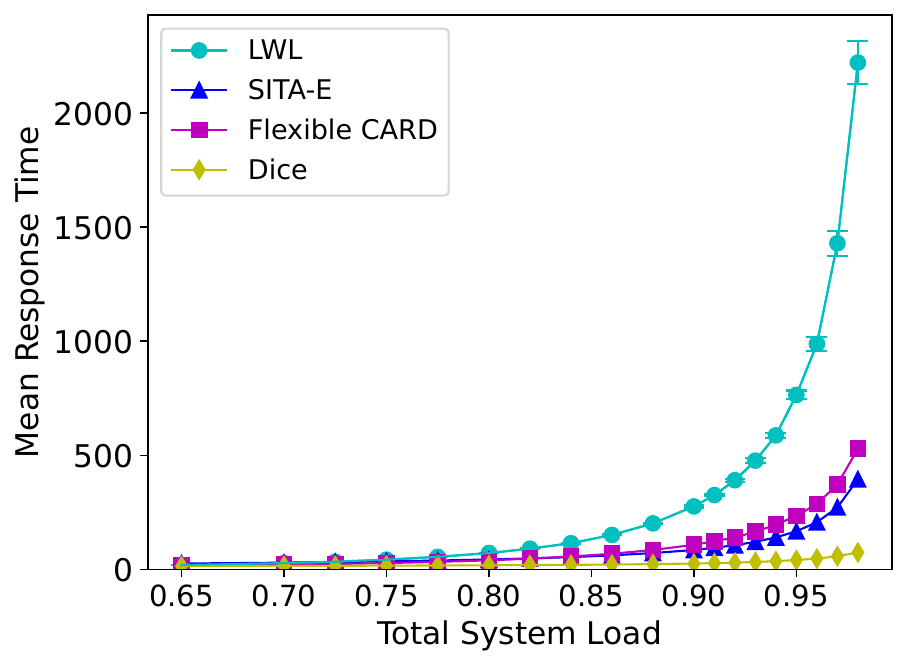}
    \includegraphics[width=\linewidth]{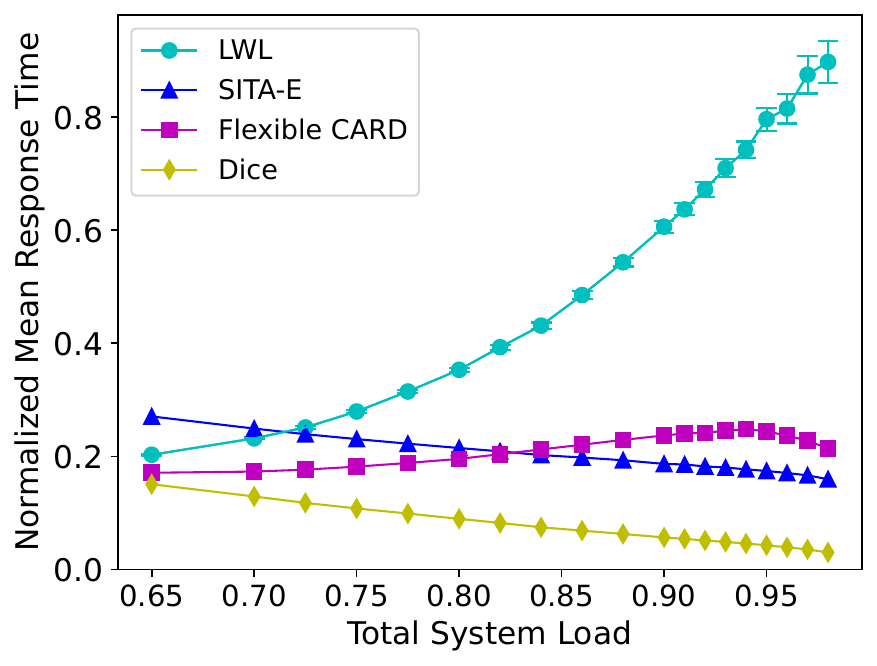}
    \caption{$10$ servers, $\mathsf{cv} = 100$}
\end{subfigure}
\caption{Plots for the mean response times under the aforementioned policies for $n = 10$ servers. On the top row are plots of mean response times of the policies. On the bottom row are plots for mean response times normalized by the mean response time of a resource-pooled M/G/1 queue. We use LWL, instead of random, dispatching to short servers when a small or medium job arrives.}
\label{fig:n-server-flexible}
\end{figure}

\newpage
\received{October 2023}
\received[revised]{January 2024}
\received[accepted]{January 2024}

\end{document}